\documentclass[12pt]{article}

\usepackage{subfigure}

 \usepackage[utf8]{inputenc}
  \usepackage[noTeX]{mmap}
  \usepackage[T1]{fontenc}
\usepackage{authblk}

\usepackage{latexsym}
\usepackage{graphics}
\usepackage{amsmath}
\usepackage{xspace}
\usepackage{amssymb}
\usepackage{psfrag}
\usepackage{epsfig}
\usepackage{amsthm}
\usepackage{pst-all}

\usepackage{color}

\definecolor{Red}{rgb}{1,0,0}
\definecolor{Blue}{rgb}{0,0,1}
\definecolor{Olive}{rgb}{0.41,0.55,0.13}
\definecolor{Yarok}{rgb}{0,0.5,0}
\definecolor{Green}{rgb}{0,1,0}
\definecolor{MGreen}{rgb}{0,0.8,0}
\definecolor{DGreen}{rgb}{0,0.55,0}
\definecolor{Yellow}{rgb}{1,1,0}
\definecolor{Cyan}{rgb}{0,1,1}
\definecolor{Magenta}{rgb}{1,0,1}
\definecolor{Orange}{rgb}{1,.5,0}
\definecolor{Violet}{rgb}{.5,0,.5}
\definecolor{Purple}{rgb}{.75,0,.25}
\definecolor{Brown}{rgb}{.75,.5,.25}
\definecolor{Grey}{rgb}{.5,.5,.5}

\newcommand{\G}{\mathbb{G}}

\newcommand{\ignore}[1]{\relax}

\newcommand{\ER}{Erd{\"o}s-R\'{e}nyi }

\definecolor{Red}{rgb}{1,0,0}
\definecolor{Blue}{rgb}{0,0,1}
\definecolor{Olive}{rgb}{0.41,0.55,0.13}
\definecolor{Green}{rgb}{0,1,0}
\definecolor{MGreen}{rgb}{0,0.8,0}
\definecolor{DGreen}{rgb}{0,0.55,0}
\definecolor{Yellow}{rgb}{1,1,0}
\definecolor{Cyan}{rgb}{0,1,1}
\definecolor{Magenta}{rgb}{1,0,1}
\definecolor{Orange}{rgb}{1,.5,0}
\definecolor{Violet}{rgb}{.5,0,.5}
\definecolor{Purple}{rgb}{.75,0,.25}
\definecolor{Brown}{rgb}{.75,.5,.25}
\definecolor{Grey}{rgb}{.5,.5,.5}
\definecolor{Pink}{rgb}{1,0,1}
\definecolor{DBrown}{rgb}{.5,.34,.16}
\definecolor{Black}{rgb}{0,0,0}

\usepackage{bbm}
\usepackage{mathrsfs}
\usepackage{epsfig}
\usepackage{soul,color}
\usepackage{graphicx}
\usepackage{amsfonts}
\usepackage{amsthm}
\usepackage[figuresright]{rotating}
\usepackage{amssymb}
\usepackage{amsmath}
\usepackage{dcolumn}
\usepackage{physics}
\usepackage{float}
\usepackage{bm}
\usepackage{verbatim}
\usepackage[normalem]{ulem}
\usepackage[colorlinks,linkcolor=blue,anchorcolor=blue,citecolor=blue,urlcolor=blue]{hyperref}
\usepackage{mathtools}

\usepackage{algorithm}
\usepackage{algorithmic}

\usepackage{xcolor}
\newtheorem{theorem}{Theorem}
\newtheorem{lemma}[theorem]{Lemma}

\newtheorem{proposition}[theorem]{Proposition}

\newcommand{\lr}[1]{\left(#1\right)}

\newcommand{\E}{\mathbb{E}}

\newcommand{\C}{\mathbb{C}}

\newcommand{\R}{\mathbb{R}}
\newcommand{\Z}{\mathbb{Z}}

\newcommand{\cE}{\mathcal{E}}
\newcommand{\cF}{\mathcal{F}}

\newcommand{\cH}{\mathcal{H}}

\newcommand{\cR}{\mathcal{R}}

\newcommand{\cT}{\mathcal{T}}

\newcommand{\id}{\mathbb{I}}

\begin{document}

\title{The free energy limit of the SYK model at high temperature}

\author{
David Gamarnik \thanks{Operations Research Center and Center for Statistics and Data Sciences, IDSS, Massachusetts Institute of Technology,
\href{mailto:gamarnik@mit.edu}{gamarnik@mit.edu}},
Francisco Pernice \thanks{EECS, Massachusetts Institute of Technology, \href{mailto:fpernice@mit.edu}{fpernice@mit.edu}},
Alexander Schmidhuber  \thanks{Center for Theoretical Physics, Massachusetts Institute of Technology, \href{mailto:schmidhuber.alexander@gmail.com}{alexsc@mit.edu}},
Alexander Zlokapa  \thanks{Center for Theoretical Physics, Massachusetts Institute of Technology, \href{mailto:azlokapa@mit.edu}{azlokapa@mit.edu}}}

\maketitle

\begin{abstract}
The Sachdev-Ye-Kitaev (SYK) model is a disordered quantum mean-field model studied in condensed matter physics and the holographic theory of black holes. Its structural properties  can be derived heuristically using a combination of the replica method and path integration techniques. Analyzing it mathematically rigorously, however, turned out to be notoriously difficult, even for  basic questions such as computing the annealed free energy.  

In this paper we rigorously compute the free energy limit (annealed and quenched) for this model at high enough but constant temperature. 
Our results are in numerical agreement with the results derived by physics methods. Remarkably, though, our method of proof is novel and is different from the physics approach. It is based on  (a) the theory of the component structure of sparse random graphs and 
(b) a variant of the cavity method, used widely  in prior rigorous and heuristic treatments of classical spin glasses.
\end{abstract}

% \newpage
\tableofcontents
% \newpage

\section{Introduction}
The Sachdev-Ye-Kitaev (SYK) model~\cite{chowdhury2022sachdev} is a disordered quantum mean-field model defined in terms of a random quantum Hermitian Hamiltonian operator $H_{\rm SYK}$ in the Hilbert 
space $\C^{2^{n\over 2}}$.
The  definition of the model is found at the beginning of the next section. 
Introduced originally by Sachdev and Ye~\cite{sachdev1993gapless} in the context of condensed matter physics, its relevance to the physics of black holes was recognized
by Kitaev~\cite{kitaev2015simple} in connection with the so-called holographic duality theory. 
The model is well-studied in the physics literature,
where most basic properties of the system were  derived by employing  standard physics methods, including the path integration and the replica methods.
Initial analysis of the system was conducted by Maldacena and Stanford~\cite{maldacena2016remarks} 
who obtained various macroscopic properties of the system including quenched
and annealed free energy (which happen to be the same for this model, see below), out-of-time-order correlators (OTOCs), and many other properties. 
This was further developed in many subsequent works in physics, too many to list, see~\cite{chowdhury2022sachdev} for a survey; we also note that a recent paper derives similar properties as SYK but from a model based on a non-random Hamiltonian~\cite{biggs2026melonic}.

Mathematically rigorous literature on SYK, however, is more limited as the model proved to be notoriously difficult to analyze, in large part due to its quantum nature.
We now describe some of the recent works, mostly focusing on the ones relevant to the questions raised in our paper. A fairly straightforward argument was used
in Feng et al~\cite{feng2019spectrum}
to show that the ground state energy of $H_{\rm SYK}$ is with high probability (w.h.p.) at most  $n\sqrt{\log 2}$ (using our choice of normalization). 
For the special case of $q=2$ body interaction, the ground state value can be computed using random matrix theory, as shown in~\cite{feng2019spectrum} as well. 
In the same
paper the authors compute the spectrum of the Hamiltonian and establish that it behaves according to the Gaussian law in some sense. An important development
is by Hastings and O'Donnell~\cite{hastings2022optimizing} who provided a matching (up to a constant) lower bound $\Theta(n)$ 
on the ground state energy, which is algorithmically
certifiable and runs in polynomial time. Namely, they have constructed a polynomial time algorithm which produces a quantum state with energy at least $cn$ 
for some constant $c>0$ whenever it exists, which it does with w.h.p. 
Chen and Lucas~\cite{chen2021operator} and  Lucas~\cite{lucas2020non} 
studied the scrambling times of the $H_{SYK}$, namely the growth rate
of the operator $[A(t),B]$ where $A(t)=e^{it H_{\rm SYK}}Ae^{-it H_{\rm SYK}}$ and $A$ and $B$ are commuting
local operators. Consistently with the physics predictions, the latter work shows that  
it takes $t=\Theta(\log n)$ time for the operator to reach order $O(1)$ in the appropriate norm. Herasymenko et al~\cite{herasymenko2023optimizing} 
show that so-called Gaussian states (which include product states)
cannot reach energy order $\Theta(n)$ in the SYK model, and this in particular elucidates why constructing near ground states algorithmically appears hard. 
Further evidence of the potential algorithmic hardness of the problem of constructing near ground states is given in Anschuetz et al~\cite{anschuetz2025strongly} 
which established
that the minimal circuit size for building a near ground state is at least $n^{\Theta(q/2)}$. Another important result in the same paper shows that the quenched and annealed free energy limits coincide. Namely, for every $\beta\in\R_+,$
\begin{align}
{1\over \beta n}\E \log \Tr\left(e^{\beta H_{\rm SYK}}\right)={1\over \beta n}\log\E\Tr\left(e^{\beta H_{\rm SYK}}\right)+o(1), \label{eq:quenched-is-annealed}
\end{align}
Hence, the model exhibits a ``non-glassy'' property, unlike analogous disordered models that replace fermions with Paulis or Ising spins. 
Notably, (\ref{eq:quenched-is-annealed}) does not imply the existence of either of the limits, which by itself is surprisingly a non-trivial property
to establish, as we discuss below. 
The asymptotic equivalence of dense and sparse
SYK models (a form of universality) was established in Anschuetz et al \cite{anschuetz2025bounds} using Lindeberg's approach. 
We take advantage of this result here when we switch from the Gaussian to the Rademacher disorder in order to simplify the proofs. 
An algorithmic question of estimating local observables
was considered in Kiani and Zlokapa~\cite{zlokapa2026syk}. Using physics methods it was shown how estimates can be computed efficiently in polynomial time.  
A rigorous counterpart of these results in the high temperature regime was obtained very recently by Zlokapa~\cite{zlokapa2026rigorous}. As of now
it remains open if the SYK model is amenable to fast algorithms in the low temperature (large $\beta$) regime.

Despite this recent progress, even some of the basic questions regarding the SYK model remain unanswered, including the value of the ground state energy 
limit, the limits of the annealed and quenched free energies, and even the existence of these limits, from the mathematically rigorous point of view.
In particular, can these values, derived using physics methods, be validated by mathematically rigorous techniques? It is notable that some of the classical counterparts
of these quantities are trivial to compute. For example the annealed free energy of the classical spin glass model with $n$-spins  is just 
$n\log 2/\beta+(n/2)\beta$, obtained trivially as the logarithm of the expectation of 
the sum of $2^n$ many exponents of centered normal random variables. Similarly, for many models
on sparse random graphs, the annealed free energy corresponds to the so-called first moment method and 
is straightforward to compute due to linearity of expectations~\cite{alon2004probabilistic}. Yet computing the annealed free energy
for quantum random Hamiltonians, such as the SYK model's, has turned out to be far more difficult. 

Computing the  quenched free energy and the ground state energies  is well-recognized to be a challenging problem, however, even in classical spin glasses.
In particular, in the physics literature this was done first by Parisi~\cite{parisi1979infinite,parisi1980sequence}, 
prompting him to introduce the celebrated Replica Symmetry Breaking (RSB) method. Mathematically rigorous backing of the replica-based predictions took place
much later. First Guerra and Toninelli~\cite{GuerraTon} proved the existence of the quenched free energy limit using a Gaussian comparison inequality type argument. 
As  mentioned earlier, even proving the existence of such limits was a mathematically non-trivial task. Then Guerra in~\cite{guerra2003broken}
established that the heuristic answer for the quenched free energy derived by Parisi using the RSB method is a rigorous upper bound on the
quenched free energy. The final step was obtained in a breakthrough work by Talagrand~\cite{talagrand2006parisi} who proved a matching lower bound.
See~\cite{TalagrandBook,panchenko2013Sherrington} for a book treatment of the subject. It is remarkable that neither the upper nor the lower
bound techniques were based on the RSB technique, which to the day remains a mathematically non-rigorous though highly accurate physics heuristic,
as is the path integration method used widely in the quantum physics literature.

\subsubsection*{The main result and the proof technique}
In this paper we compute the annealed (and therefore by~\cite{anschuetz2025strongly} also the quenched) free energy  
limit of the SYK model at small enough but constant inverse temperature parameter $\beta$. In particular, we prove that this limit exists in the same 
range of $\beta$.
The answer is provided in terms of the solution of a certain functional fixed point
equation and can be easily computed numerically. As we report in Section~\ref{section:Numerics}, our results are in perfect agreement with the results
based on physics methods in~\cite{maldacena2016remarks} for the range of $\beta$ we managed to conduct the computations, which is $\beta \in [0,3]$.
Our result also establishes a lower bound $\Theta(n)$ on the ground state energy limit, similarly to~\cite{hastings2022optimizing}. 
This is done by verifying numerically that the free energy limit has a positive value at positive small $\beta$  and then using a straightforward bound
between the free energy and the ground state energy values.

Importantly, our proof technique departs from the path integration and replica techniques, and is based instead on two entirely different ingredients, which 
are (a) theory of the component structure of sparse random graphs and (b) the cavity method. The analytic form of our answer also differs from that 
of~\cite{maldacena2016remarks}, and the agreement between the two answers is only validated numerically. We believe though that the two answers
can be reconciled analytically and leave it as an interesting open question for further research. 
We now elaborate on the techniques (a) and (b) above, and then explain
how they feature in our approach. The component structure of sparse random graphs was one of the earliest developments in the theory of random
graphs~\cite{BollobasBook,alon2004probabilistic,janson2011random} and can be summarized roughly as follows. Consider a sparse
random \ER graph $\G(n,d/n)$ which is obtained by pairing every two of $n$ nodes with probability $d/n$ independently across all pairs. 
Here $d$ is independent of $n$.
Then when $d<1$, the connected components of this graph are $O(1)$ in size on average, and even the largest component has size  only $\Theta(\log n)$ w.h.p.
Conversely, when $d>1$, the largest component is w.h.p. $\Theta(n)$,  often dubbed as the ''giant'' component. The component structure in the $d<1$ regime
is further described  as a simple subcritical branching processes with Poisson out-degree distribution with parameter $d$. 

The cavity method was pioneered in the physics of spin glasses by  Mezard, Parisi and Virasoro~\cite{MezardParisiVirasoro}. It is a remarkable
exception from  other physics methods in the sense that it admits a mathematically rigorous treatment, at least for some models and in certain regimes.
Letting $Z_n$ denote the partition function of some $n$-spin model, the method is based on representing the free energy $\log Z_n$ 
as a telescoping sum $\sum_{k\le n} (\log Z_k-\log Z_{k-1})$ and then representing the change $\log (Z_k/Z_{k-1})$
(a combinatorial derivative of a kind) in terms of the marginal distribution of the $k$-th spin added to the $k-1$-spin size model. To the extent
that such a distribution can be computed, the method leads to  the estimation of the free energy and ground state values. 
One of the earliest
mathematically rigorous treatments of the cavity method (though not called so back then) is Aldous' work on random assignment 
 mean field model~\cite{aldous1992asymptotics,aldous2001zeta,AldousSteele:survey}. 
As a mathematical method, the cavity trick turned out to be particularly powerful
in the case of sparse models such as $\G(n,d/n)$ or its random regular counterpart, as demonstrated 
in Gamarnik et al~\cite{gamarnik2006maximum,gamarnik2008invariant}, (see 
also~\cite{AldousSteele:survey,GamarnikGoldbergRegGraphs,gamarnik2010ptas,gamarnik2013correlation,GamarnikTutORialsInforms}).
The cavity method proved to be effective  even for non-random (worst-case) models as an approach for computing  partition 
functions~\cite{BandyopadhyayGamarnikCounting,weitzCounting}. Applicability of the cavity method to lattices was also demonstrated
in~\cite{GamarnikKatzSequentialCavity} and dubbed  the Sequential Cavity Method.  
As we remark in the last section, we believe that the Sequential Cavity  Method might prove to be useful for quantum lattice models as well. 
As mentioned earlier, the cavity method is effective so long as the 
marginal distribution of the spins can be computed, and the latter is typically done by  establishing
the correlation decay property.  This effectively reduces the computations on a big graph  to the one on small order $O(1)$ subgraph, usually a tree. 
The computation is done
by conducting the belief propagation type iteration on marginal distributions. A recent application of the cavity method to the dynamics of spin glasses 
was obtained by the first two authors in~\cite{dandi2026rigorous}.

We now  summarize our approach and concurrently explain how the theory of the component structure of random graphs 
and the cavity method are employed as the main ingredients.
We start by expanding the partition function $\E[\Trace e^{\beta H_{\rm SYK}}]$ as 
\begin{align*}
\sum {\beta^m\over m!} \Tr\E H_{\rm SYK}^m=\sum {\beta^m\over m!}\sum_{\bar i_1,\ldots,\bar i_m} \Tr\E \prod H_{\bar i},
\end{align*}
where $H_{\bar i}$ are local Majorana terms in the Hamiltonian $H_{\rm SYK}$ and $\Tr$ is the trace operator. 
It is not hard to show (though technically not required for our derivation) that the
dominating contribution to the sum above comes from the terms when  $m$ is order $n$. In this case a typical 
element $\bar i_1,\ldots,\bar i_m$ of the sum above can be viewed as a sparse
random hypergraph with $m$ hyperedges induced by $q$-tuples $\bar i_1,\ldots, \bar i_m$. 
It is also possible to show that the dominant contribution comes from the case when each hyperedge appears
exactly twice, thus leading to the well-recognized chordal structure of the Majorana terms, used widely in heuristic physics derivations (e.g.,~\cite{garcia2018exact}). This also implies that the product $\prod H_{\bar i}$ is always the identity operator up to sign, and determining
the sign accurately is the ''name of the game'' for the task of estimating the free energy limit.

The chordal structure is constructed as follows. In the permutation associated with the order $\bar i_1,\ldots,\bar i_m$ one associates a chord
with the two appearances of each Majorana term.  The crossing pattern of the chordal structure is then what ultimately determines 
the sign in front of the identity. This is what we compute up to leading order as follows.
It is  easy to show that for sufficiently small constant $\beta$, $m/n$ is a small constant. Thus, heuristically, one can think
of the unordered sequence $\bar i_1,\ldots,\bar i_m$ as a sub-critical ($d<1$) random (hyper) graph with only small  component sizes. The chordal structure
factorizes over connected components and thus it suffices to compute those separately for each component, each of which is  $O(1)$ in size. This is what we do using
the cavity argument. We derive a cavity type iteration for the expected value of the  sign factor for a root of a  rooted tree in terms of similar expectations
for the children of a root. The expectation is with respect to the uniform random permutation order induced by the ordered sequence $\bar i_1,\ldots,\bar i_m$.
Written in the generating function format, we obtain a functional iteration scheme, for which the answer is obtained
as a unique solution of a fixed point equation. 
The answer is then incorporated for each component leading to the expectation of the sign for the full term. One can think of these cavity iterations as tree analogues
of the Riordan-Touchard formula for chord crossing numbers, which was derived for complete graphs~\cite{riordan1975distribution} in the 1970s.

Importantly, though the component structure dominating the final term \emph{is not} the component structure of a sparse \ER because of the large deviation
effects. Loosely speaking, this is so  because the final answer is of the form $\prod b(C_j)$, where $C_j$ are the connected components and $b(C_j)\in [-1,1]$ 
is the expected sign for these components. As a result  this product is of the order $\exp(-\Theta(n))$. Consequently,
 exponentially rare random graphs can (and in fact do) contribute
dominantly to the answer if their corresponding product $\prod b(C_j)$ is exponentially larger than the one for the random (Erd{\"o}s-R\'{e}nyi) graph. 
This, however, is taken into account directly by writing the generating function of the expected sign in terms of its components. The final answer
is obtained by computing analytically  the large deviation limit of the sum product $\prod b(C_j)$, which is done using the saddle point method. 

While our computations are  proven valid only for $\beta$ small enough, the fact that our results  match the physics predictions, which have the same functional
form for all $\beta$, we conjecture that the free energy limit we compute is actually mathematically valid for all values of $\beta$. We leave this as one
of the most important open questions.

\subsubsection*{The use of large language models}
In our analysis we  have used the assistance of ChatGPT and the nature of this help is described below. 
We have derived the expression for the partition function
 (\ref{eq:sum-k-v}) directly (with no assistance of the LLM). 
 From here one can derive, as we did,  the limiting value for this expression  heuristically, by assuming that the components which
are trees with multiplicities exactly $2$ (which corresponds to the excess value $\Delta=0$) dominate this expression. 
Our next approach was to prove such domination
directly, which is technically challenging (though likely doable) by controlling the terms from components with large size and non-zero values of $\Delta$. 
Instead, ChatGPT suggested the saddle point method based on complex valued generating
function of the partition function, and worked out the necessary estimates, such as the two regimes of integration in the proof of the
 upper bound and three regimes of estimation
in the proof of the  lower bound. This is the approach we have ultimately opted for in the writeup. 
We speculate that the  preference for the saddle point method over a direct combinatorial
analysis represents a predominant approach for such computations in physics and mathematical physics literature.  Naturally, though,  this  is only our guess. 

The remainder of the paper is structured as follows. The main results with necessary prerequisites and technical background are stated in the next section.
The proof of the main result is split across  Sections~\ref{section:main-result} and~\ref{section:loose-ends}. Numerical results, including comparisons
with answers from the physics literature, are given in Section~\ref{section:Numerics}. We conclude in Section~\ref{section:open-problems} with some open problems.

\subsubsection*{Notations and conventions}
We close this section with some notational conventions. We let $[m]$ denote the set $\{1,2,\ldots,m\}$ for every positive integer $m$.
$\R,\C,\Z$ and $\Z_+$ stand for  the set of all real, complex, integer and non-negative integer values, respectively. 
$\Tr$ denotes the trace operator and $\tr=2^{-{n\over 2}}\Tr$ is the normalized trace (so that $\tr(\id)=1$ for the identity operator $\id$ in $\C^{2^{n\over 2}}$). 

Next we introduce some graph-theoretic notations.
A simple $q$-uniform hypergraph, or just simple hypergraph is a pair $(V,E)$ where $V$ is the set of nodes and $E$ is a  subset of the set of 
all ${V \choose q}$ cardinality $q$ subsets of $V$. 
For two hyperedges $e_1,e_2\in E$ we  write $e_1\sim e_2$ when $e_1\cap e_2\ne \emptyset$.
With some abuse of notation we let $\binom{V}{q}$ denote the set of all $q$-size subsets of $V$.
A multi $q$-uniform hypergraph, or just multi hypergraph, is a pair $(V,\bar r)$ where $\bar r\in \Z_+^{|V| \choose q}$. For each subset $e\in \binom{V}{q}$ 
$r_e\in \Z_+$ is called the multiplicity of the hyperedge $e$. $m(V,E)=\sum_{e\in \binom{V}{q}} r_e$ denotes the total multiplicity of the
multi hypergraph $(V,E)$.
Each multi hypergraph naturally induces a simple hypergraph where hyperedges are included
iff $r_e\ge 1$. Let $\cH^{\rm sim}([m])$ denote the set of all simple hypergraphs on the node set $[m]$. 
$\cH([m])$ denotes the  set of all multi hypergraphs on the node set $[m]$ with even  multiplicities, namely $r_e\in 2\Z_+$ for all $e$. 
Note that while $\cH^{\rm sim}([m])$ is a finite set for every $m$, the set $\cH([m])$ is infinite. 

A simple hypergraph $(V,E)$ is connected if for every $u,v\in V$ there exists a path from $u$ to $v$. Namely, there exists a sequence of hyperedges $e_1,\ldots,e_r$
such that $u\in e_1,v\in e_r$ and $e_i\cap e_{i+1}\ne\emptyset, i\in [r-1]$. A multi hypergraph is connected if its induced simple hypergraph is connected. 
Every simple and multi hypergraph $H$ admits a unique decomposition into 
connected components $H_1,\ldots,H_C$ so that $H$ is the disjoint union of $H_i$ and each $H_i$
is a connected hypergraph.
A simple hypergraph $(V,E)$ is  called a hypertree or simply a tree 
if it is a connected hypergraph with precisely $(|V|-1)/(q-1)$ hyperedges. 
It is easy to verify that every two hyperedges of a hypertree intersect in at most one node.
The set of all simple hypertrees 
on the node set $[m]$ is denoted by $\cT^{\rm sim}_m$. A singleton, namely a graph on one node with no hyperedges is defined to be connected by default, so that $\cT_{1}^{\rm sim}=\{1\}.$
A special role in our derivation 
will be played by multi hypertrees where all multiplicities are $r_e=2$, so that the total number of hyperedges is $2(m-1)/(q-1)$. We denote the set of
such hypertrees by $\cT_{m,2}$. Given $m\ge 1,\Delta\ge 0$, $\cH_{m,\Delta}\subset \cH([m])$ denotes the set of all connected multi hypergraphs  with 
precisely $2(m-1+\Delta)/(q-1)$ hyperedges. Note that $\cH_{v,0}=\cT_{v,2}$. We call $\Delta$ the excess of $H$ for $H\in\cH_{m,\Delta}$.

\section{Main result. Free energy limit}
The  SYK model is defined on the Hilbert space  $\C^{2^{n\over 2}}$ via 
the so-called Majorana  operators $\psi_i, i\in [n]$ on $\C^{2^{n\over 2}}$ satisfying 
\begin{align}\label{eq:anti-commutation}
\{\psi_i, \psi_j\}\triangleq \psi_i\psi_j+\psi_j\psi_i=\delta_{ij}. 
\end{align}
Here $\delta_{ij}$ is the identity operator $\id$ on $\C^{2^{n\over 2}}$,  when $i=j$, and $0$ otherwise, 
so that $\psi_i^2={1\over 2}\id$. These operators can be explicitly constructed using standard Pauli 
operators $X,Y,Z$ by setting $\psi_{2i+1}={1\over \sqrt{2}}(\otimes_{1\le k\le i}Z) \otimes X_{i+1}\otimes \id$
and $\psi_{2i+2}={1\over \sqrt{2}}(\otimes_{1\le k\le i}Z) \otimes Y_{i+1}\otimes \id$, for $0\le i\le n/2-1$. 
This explicit construction though will not be necessary for us. We fix a positive even integer $q$, the interaction parameter. 
For every $I=(i_1,\ldots,i_q)\in \binom{[n]}{q}, 1\le i_1<\cdots<i_q \le n$ we define $ \psi_I \triangleq i^{q/2}\psi_{i_1}\cdots \psi_{i_q}$. In particular $\psi_I^2=2^{-q}\id$,
which is an easy implication of (\ref{eq:anti-commutation}).
The Hamiltonian of the SYK model with $q$-body interaction is then defined as
the Hermitian operator
\begin{align}\label{eq:H-definition}
    H_{\rm SYK} \triangleq \sum_{I \in \binom{[n]}{q}} J_I \psi_I,
\end{align}
where $(J_I, I\in \binom{[n]}{q})$ is a sequence of i.i.d. random variables defined by $J_I=\sigma \bar J_I$ where $\sigma^2 = \frac{(q-1)!}{n^{q-1}}$
and $\bar J_I$ are Rademacher random variables, independent across $I\in \binom{[n]}{q}$. 
I.e. $\bar J_I=\pm 1$ with equal probabilities $1/2$.
By the universality property established in~\cite{anschuetz2025bounds} 
the free energy limit (defined below) will be the same for the case when $\bar J_I$ are i.i.d. standard normal
random variables, which is a more common choice in defining the SYK Hamiltonian.
For every fixed inverse temperature $\beta\in\R$ we define the associated (random) partition function 
$Z_{n,q}(\beta) \triangleq \tr(e^{\beta H_{\rm SYK}})$, where we recall that $\tr=2^{-{n\over 2}}\Tr$ stands for the normalized trace operator.
The normalization by $2^{n\over 2}$ is chosen for convenience as the trace of the identity operator $\id$ in $\C^{2^{n\over 2}}$ is $2^{n\over 2}$.
 We define the associated free energy as
\begin{align*}
    F_{n,q}(\beta) \triangleq  {1\over \beta}\log Z_{n,q}(\beta). 
\end{align*}
Our main result is the computation of the free energy limit 
\begin{align*}
F_q(\beta)\triangleq \lim_{n\to \infty} {F_{n,q}(\beta) \over n},
\end{align*}
for small enough constant inverse temperature $\beta$. In particular, we  prove
 the existence of this limit (see the discussion in the introduction), and establish that it holds w.h.p. as $n\to\infty$.
In light of the ''quenched $\approx$ annealed'' result (\ref{eq:quenched-is-annealed}) mentioned earlier and in light of the 
concentration of the quenched free energy around its expectation, also verified  in~\cite{anschuetz2025bounds}, 
we instead focus on 
\begin{align*}
    f_{n,q}(\beta) \triangleq  {1\over \beta}\log \E[Z_{n,q}(\beta)].
\end{align*}
Thus our goal is proving  the existence and computing the limit 
\begin{align*}
f_q(\beta)\triangleq \lim_{n\to \infty} {f_{n,q}(\beta) \over n},
\end{align*}
as $f_q(\beta)=F_q(\beta)$ for all $\beta$.

In order to state our result formally, we begin by introducing some necessary  technical preliminaries. 
Let $\Omega=[0,1]^2$. Elements $x=(u_1,u_2)\in \Omega$ will be referred to as ''chords''.
For every  $x=(u_1,u_2), y=(v_1,v_2)\in \Omega$, define the chord intersection function
\begin{align*}
    K(x,y) = \begin{cases}
        -1 & \text{if}~u_1 < v_1 < u_2 < v_2 \text{ or } v_1 < u_1 < v_2 < u_2,\\
        1 & \mathrm{otherwise}.
    \end{cases}
\end{align*}
Namely $K(x,y)=-1$ when the chords $x$ and $y$ cross and $=1$ otherwise. Let $\nu$ be the probability law of $(\min(U_1,U_2),\max(U_1,U_2))\in\Omega$
when $U_j, j=1,2$ are generated uniformly at random from $[0,1]$, independently. 
\begin{figure}[H]
    \centering
    \includegraphics[width=\linewidth]{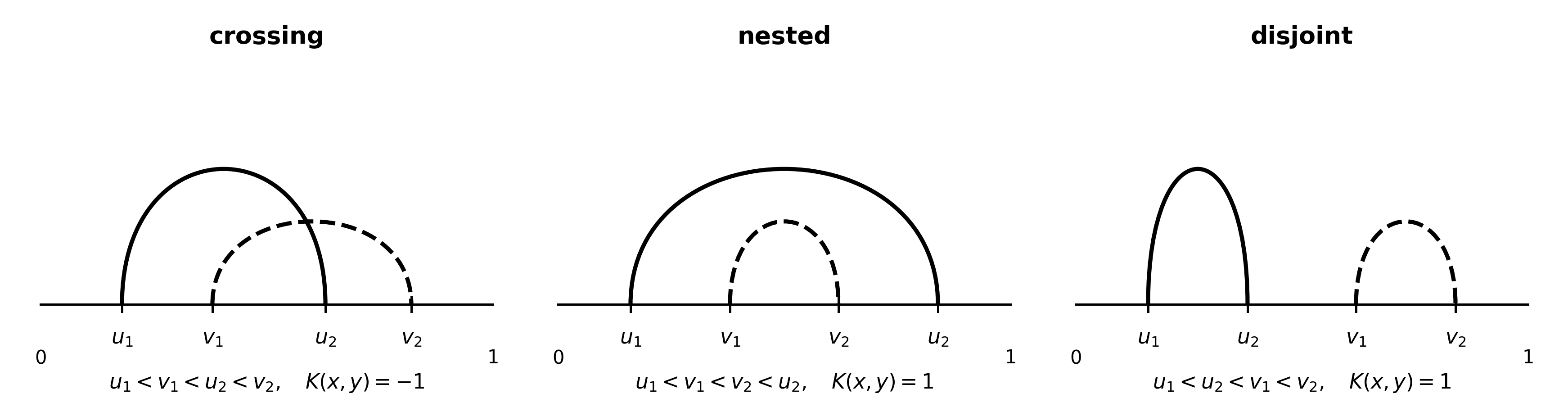}
    \caption{Chord intersection sign for two chords $x = (u_1,u_2),y=(v_1,v_2)\in \Omega.$}
    \label{fig:placeholder}
\end{figure}

On the space of continuous functions from $\{z\in \C: |z|\le 1/e^2\}\times \Omega$ to $\C$ 
define a norm $\|G\|_{e,\Omega}=\sup_{|z| \le 1/e^2, x\in\Omega} |G(z,x)|$. 
The restriction $|z|\le 1/e^2$ can be relaxed a little and is chosen for convenience.
Let $\mathcal{F}_{e,\Omega}$ 
be the subset of these functions $G$ satisfying $\|G\|_{e,\Omega}\le 1$.
On $\mathcal{F}_{e,\Omega}$ we introduce a function $H$, which 
we call the Chord-Cavity-Generator (CCG), defined implicitly via
the following fixed point equation
\begin{align}\label{eq:H-def}
 H(z,x)
        =
        z+ z\sum_{d\geq 1}\frac{1}{d!}
        \int_{\Omega^d}
        \prod_{i=1}^d
        \left[
            K(x,y_i)\,\frac{H(z,y_i)^{q-1}}{(q-1)!}
        \right]
        \prod_{1\leq i<j\leq d} K(y_i,y_j)\,
        \nu^{\otimes d}(dy_1\cdots dy_d).
\end{align}
The validity of this definition is justified by the following claim.
\begin{proposition}[Existence of CCG]\label{prop:existsCCG}
For every even $q\ge 2$ there exists $H\in\mathcal{F}_{e,\Omega}$ which is the unique solution of the equation (\ref{eq:H-def}). 
Furthermore, $H$ is analytic on the domain $\{z\in \C: |z|\le 1/e^2\}\times \Omega$.
\end{proposition}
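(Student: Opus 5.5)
We will obtain $H$ as the fixed point of a contraction on the complete metric space $\mathcal{F}_{e,\Omega}$, equipped with $\|\cdot\|_{e,\Omega}$. Define the operator $T$ on $\mathcal{F}_{e,\Omega}$ by setting $(TG)(z,x)$ equal to the right-hand side of (\ref{eq:H-def}) with $H$ replaced by $G$. The plan has four steps. First, check that $T$ maps $\mathcal{F}_{e,\Omega}$ into itself. Second, show that $T$ is a strict contraction there. Third, deduce existence and uniqueness from the Banach fixed point theorem. Fourth, deduce analyticity from the fact that uniform limits of analytic functions are analytic.

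\emph{Self-map.} We have $|K|\le 1$ and $\nu$ is a probability measure, and $\|G\|_{e,\Omega}\le 1$. Hence the $d$-th term of the series is bounded in modulus by $\frac{1}{d!}\left(\frac{1}{(q-1)!}\right)^d$. For $|z|\le e^{-2}$ this gives
\begin{align*}
|(TG)(z,x)|\le |z|\exp\!\left(\tfrac{1}{(q-1)!}\right)\le e^{-2}e^{1}=e^{-1}<1 .
\end{align*}
The series converges absolutely and uniformly, so $TG$ is well defined and bounded.

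Continuity of $TG$ needs more care, because $K(x,\cdot)$ is discontinuous. However, its discontinuity set (coinciding endpoints) is $\nu$-null. By dominated convergence, $x\mapsto\int K(x,y)\phi(y)\,\nu(dy)$ is therefore continuous for bounded measurable $\phi$, and the same holds inside the $d$-fold integrals. Joint continuity in $(z,x)$ then follows from the uniform convergence of the series. If this step becomes delicate, we could instead work in the closed space of bounded functions that are measurable in $x$ and analytic in $z$, which avoids continuity in $x$ altogether.

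\emph{Contraction.} Take $G_1,G_2\in\mathcal{F}_{e,\Omega}$. Since $|a^{q-1}-b^{q-1}|\le (q-1)|a-b|$ for $|a|,|b|\le 1$, a telescoping bound over the $d$ factors of the product gives
\begin{align*}
\|TG_1-TG_2\|_{e,\Omega}\le e^{-2}\sum_{d\ge1}\frac{d}{d!}\frac{(q-1)}{((q-1)!)^{d}}\|G_1-G_2\|_{e,\Omega}.
\end{align*}
Write $a=1/(q-1)!\le 1$. The coefficient equals $e^{-2}(q-1)a\,e^{a}$. For $q=2$ this is $e^{-2}e=e^{-1}<1$. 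For $q\ge4$, $(q-1)a=1/(q-2)!\le 1/2$, so the coefficient is below $e^{-1}$. Thus $T$ is a contraction with constant at most $e^{-1}$, and Banach's theorem yields a unique fixed point $H\in\mathcal{F}_{e,\Omega}$.

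\emph{Analyticity.} Start the iteration at $G_0\equiv 0$. By induction, each iterate $T^kG_0$ is analytic in $z$ on $|z|<e^{-2}$ for every fixed $x$: the series defining $T$ converges uniformly, and integrating against $\nu$ preserves analyticity by Morera and Fubini. The iterates converge uniformly to $H$, so by Weierstrass $H(\cdot,x)$ is analytic on the open disc. Indeed, the self-map bound $|z|e^{a}$ holds up to a slightly larger radius (for instance $|z|<e^{-1}$), so analyticity extends to the closed disc $|z|\le e^{-2}$.

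The main obstacle is the function-space bookkeeping, not the estimates. The relevant points are the discontinuity of $K$ in the definition of $\mathcal{F}_{e,\Omega}$ and the meaning of "analytic" jointly in $(z,x)$, which can only reasonably mean analytic in $z$ for each fixed $x$. The contraction estimate itself is routine once the $(q-1)!$ normalisation is used.
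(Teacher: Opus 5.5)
Your proposal is correct and follows essentially the paper's route. Both arguments run a Banach contraction on $\mathcal{F}_{e,\Omega}$ built from three ingredients: the self-map bound, continuity via dominated convergence (the discontinuity set of $K$ is $\nu$-null), and the telescoping estimate giving contraction constant at most $e^{-1}$. The paper simply uses $|G|^{q-1}/(q-1)!\le 1$ instead of keeping your factor $a=1/(q-1)!$, which changes nothing.

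Your analyticity argument, via uniform limits of analytic iterates, supplies a step the paper's proof does not spell out. For the closed disc you need slightly more than the self-map bound: the contraction bound $r(q-1)ae^{a}<1$ also holds on $|z|\le r$ for any $r<e^{-1}$. Since $T$ acts pointwise in $z$, uniqueness then identifies the restriction of that larger-disc fixed point with $H$ on $|z|\le e^{-2}$.
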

The proof of this proposition is found in Section~\ref{section:loose-ends}.
Next we  provide a more intuitive and equivalent description of CCG via the chord crossing diagrams on trees. Consider any simple 
$q$-uniform hypergraph $H=(V,E)$ on the node set $V=V(H)$ and edge set $E=E(H)$.  
Suppose we have a chord mapping $X:E\to \Omega$. Define  
\begin{align*}
B_H(X)=\prod_{e_1,e_2\in E: |e_1\cap e_2|~{\rm odd}}K(X(e_1),X(e_2))\in \{\pm 1\}.
\end{align*}
Namely, $B_H(X)$ is $(-1)^R$, where $R$ is  the number of pairs of hyperedges which share an odd number of nodes and such that their corresponding chords cross. 
Let 
\begin{align}
b_H=\E[B_H(X)], \label{eq:b_H}
\end{align}
when $X$ is generated according to  the law $\nu^{\otimes |E|}$. Note that $| b_H|\le 1$.

Similarly, suppose we have a simple rooted hypergraph $H$ with a distinguished node $r\in V$ called a root. Fix $x\in \Omega$
and $X:E\to\Omega$. Define
\begin{align}
B_{H,r}(X,x)=\prod_{e\in E:r\in e}K(x,X(e)) 
\prod_{e_1,e_2\in E:|e_1\cap e_2|~{\rm odd}}K(X(e_1),X(e_2))\in \{\pm 1\}. \label{eq:Bhr}
\end{align}
We can think of $B_{H,r}(X,x)$ as $B_{\tilde H}(X)$ of a modified graph $\tilde H$ obtained from $H$ by adding a virtual hyperedge $\tilde e$
containing root node $r$ and $q-1$ new nodes, and such that the chord at $\tilde e$ is fixed  at value $x$. 
Let $b_{H,r}(x)=\E[B_{H,r}(X,x)]$ again when $X$ is generated according to the law $\nu^{\otimes |E|}$. We have $| b_{H,r}(x) |\le 1$ as well.
We now provide an alternative characterization of the CCG function $H(z,x)$ through the 
expected crossing functions $b$ on trees. Recall that $\cT^{\rm sim}_v$ is the set of simple hypertrees on $v$ nodes.

\begin{proposition}[CCG graph theoretically]\label{prop:CCG-graph-theoretically}
The function $H$ defined in Proposition~\ref{prop:existsCCG} satisfies
\begin{align}
 H(z,x)\triangleq 
\sum_{v=1}^\infty \frac{z^v}{v!}
    \sum_{T\in\cT_v^{\rm sim}} \sum_{r\in V(T)} b_{T,r}(x) \label{eq:H-as-tree-sum}
\end{align}
on $\{z: |z|\le 1/e^2\}\times \Omega$.
\end{proposition}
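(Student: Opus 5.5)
The plan is to show that the tree series on the right-hand side of (\ref{eq:H-as-tree-sum}), call it $G(z,x)$, is an element of $\mathcal{F}_{e,\Omega}$ and satisfies the fixed point equation (\ref{eq:H-def}). The uniqueness part of Proposition~\ref{prop:existsCCG} then forces $G=H$. This needs three steps: (i) absolute and uniform convergence of the series with $\|G\|_{e,\Omega}\le 1$, together with continuity; (ii) a factorization of $b_{T,r}(x)$ obtained by decomposing the tree at its root; (iii) an exponential-generating-function computation showing that (ii) reproduces the right-hand side of (\ref{eq:H-def}).

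\textbf{Step (i).} Since $|b_{T,r}(x)|\le 1$, the $v$-th term is bounded by $\frac{|z|^v}{v!}\, v\,|\cT^{\rm sim}_v|$. Here $v\equiv 1 \pmod{q-1}$, since otherwise the set is empty. I will use the known count of labelled $q$-uniform hypertrees with $k=(v-1)/(q-1)$ edges,
\begin{align*}
|\cT^{\rm sim}_v|=\frac{(v-1)!\, v^{k-1}}{k!\,((q-1)!)^k},
\end{align*}
or simply the crude bound $v\,|\cT^{\rm sim}_v|\le v^{v-1}$, which I will double-check. Combined with $v^{v}/v!\le e^{v}$ and $|z|\le e^{-2}$, the $v$-th term is at most $e^{-v}$. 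Summing gives $\|G\|_{e,\Omega}\le\sum_{v\ge1}e^{-v}=1/(e-1)<1$. Uniform convergence gives analyticity in $z$ for each fixed $x$. Continuity in $x$ holds because each $b_{T,r}(x)$ is an integral against $\nu$, which has a density, of a function whose dependence on $x$ is through crossing indicators; these change only on null sets as $x$ moves.

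\textbf{Step (ii).} In a hypertree, two hyperedges intersect in at most one node. Hence the condition ``$|e_1\cap e_2|$ odd'' simply means that $e_1$ and $e_2$ share a node. Let $e_1,\dots,e_d$ be the hyperedges containing the root $r$, with chords $y_1,\dots,y_d$. These edges pairwise share $r$, so they contribute $\prod_i K(x,y_i)\prod_{i<j}K(y_i,y_j)$. Removing $r$ leaves, for each $e_i$, its $q-1$ other nodes $w$, each the root of a subtree $T_w$. An edge of $T_w$ meets $e_i$ exactly when it contains $w$. Edges lying in distinct subtrees are disjoint, as are edges of a subtree $T_w$ and root edges other than $e_i$; both facts follow from acyclicity. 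Conditioning on $y_1,\dots,y_d$ and using independence of the chords across disjoint edge sets, I get
\begin{align*}
b_{T,r}(x)=\int_{\Omega^d}\prod_{i=1}^d\Big[K(x,y_i)\prod_{w\in e_i\setminus\{r\}} b_{T_w,w}(y_i)\Big]\prod_{i<j}K(y_i,y_j)\,\nu^{\otimes d}(dy).
\end{align*}

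\textbf{Step (iii).} This is the standard exponential formula for labelled rooted structures. A rooted hypertree is a root (factor $z$) together with an unordered set of $d$ root edges (factor $1/d!$). Each root edge carries an unordered $(q-1)$-set of rooted subtrees (factor $1/(q-1)!$), each weighted by the same series evaluated at that edge's chord. Relabellings are accounted for by the multinomial coefficients implicit in the $z^v/v!$ normalization. Summing the factorization from (ii) over all such decompositions, and using the absolute convergence from (i) to exchange sums and integrals, yields exactly $G=z+z\sum_{d\ge1}\frac1{d!}\int\prod_i[K(x,y_i)G(z,y_i)^{q-1}/(q-1)!]\prod_{i<j}K(y_i,y_j)\,d\nu^{\otimes d}$. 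The case $d=0$ is the singleton, which contributes $z$.

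I expect the main obstacle to be Step (i) with the exact constant. One must verify that the hypertree count is small enough to give $\|G\|\le 1$ uniformly in $q$ at radius $e^{-2}$; if the crude bound $v^{v-1}$ fails for some $q$, I will use the exact formula above instead. The combinatorial bijection in (iii) is routine but needs care with the labelling factors.
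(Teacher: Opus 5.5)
Your proposal is correct and follows the same route as the paper. Both decompose each rooted hypertree at its root to show that the tree series satisfies (\ref{eq:H-def}), and then invoke the uniqueness from Proposition~\ref{prop:existsCCG}; your subtrees $T_w$ with weights $b_{T_w,w}(y_i)$ are a cleaner version of the paper's $T_{e_i,k}$. Your Step (i) is more careful than the paper, which never checks that the series lies in $\mathcal{F}_{e,\Omega}$ as uniqueness requires, and your crude bound $v|\cT^{\rm sim}_v|\le v^{v-1}$ does hold: $v|\cT^{\rm sim}_v|$ equals $v^{s}$ times the number of partitions of $v-1$ points into $s$ blocks of size $q-1$, and that number is at most $v^{s(q-2)}$, so $\|G\|_{e,\Omega}\le 1/(e-1)$.
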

Here the  double sum  $\sum_{T\in\cT_v^{\rm sim}} \sum_{r\in V(T)}$ is set to zero  when the set $\cT_v^{\rm sim}$ 
is empty (which is the case when $q-1$ does not divide $|V|-1$).
The proof of this proposition is found in Section~\ref{section:loose-ends}.

Next we introduce two $H$-related functions, $\cR$ and $\cE$ via
\begin{align}
    \cR(z)
    &=
    z+z \sum_{d\geq 1}\frac{1}{d!}
    \int_{\Omega^d}
    \prod_{i=1}^d
    \left[
        \frac{H(z,y_i)^{q-1}}{(q-1)!}
    \right]
    \prod_{1\leq i<j\leq d} K(y_i,y_j)\,
    \nu^{\otimes d}(dy_1\cdots dy_d), \label{eq:Rdef}\\
    \cE(z)
    &=
    \frac{1}{q!}\int_{\Omega} H(z,x)^q\,\nu(dx). \label{eq:Edef}
\end{align}
Both are well-defined and analytic  on $\{z: |z|\le 1/e^2\}$ since $\sup_{z: |z|\le 1/e^2}\sup_{x\in \Omega}|H(z,x)|<\infty$ by Proposition~\ref{prop:existsCCG}. Let
 \begin{align*}
        \rho = \left(\frac{(q-1)!\beta^2}{2^{q+1}}\right)^{1/(q-1)},
    \end{align*}
and let
\begin{align*}
\mathcal{F}(z)=\rho^{-1}\left[\cR(\rho z) - (q-1)\cE(\rho z)\right], 
\end{align*}   
defined whenever both $\cR$ and $\cE$ are defined. 
As for the CCG function $H$, there is a graph-theoretic equivalent representation of $\mathcal{F}$, which 
is as follows.  For every $v\ge 1$ and $\beta$ define
\begin{align}\label{eq:mu-0}
    \mu_{v} = \frac{\sum_{T\in\cT_{v}^{\rm sim}} b_T}{v!} \rho^{v-1}.
\end{align}
We set $\mu_v=0$ when $\cT_v$ is an empty set.
We note that the set $\cT_1^{\rm sim}$ is not empty and consists of a singleton.
As it turns out $\mathcal{F}$ is the generating function for the sequence $\mu_v$. Specifically, define
\begin{align}
\Phi(z) = \sum_{v\ge 1} \mu_{v} z^v. \label{eq:Phi}
\end{align}
Our third  technical preliminary is as follows.
\begin{proposition}\label{prop:t-star}
There exists small enough $\beta^*>0$ such that for all $\beta\in \R$ satisfying $|\beta|\le \beta^*$, $\Phi$ is analytic on $\{z: |z|\le 2\}$ .
For all such $\beta$, there exists a unique solution $t^*=t^*(\beta)\in \R$ of 
\begin{align}\label{eq:t^*}
t^* \dot\Phi(t^*) = 1,
\end{align}
in the range  $(1/2,3/2)$.
\end{proposition}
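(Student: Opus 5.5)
Since $|b_T|\le 1$, the plan is to bound the coefficients $\mu_v$ directly by counting hypertrees, show that $\Phi$ is a small perturbation of the identity on $|z|\le 2$, and then treat $t\dot\Phi(t)=1$ as a perturbation of $t=1$.

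\textbf{Step 1: a counting bound.} Using $|b_T|\le 1$ we get
\[
|\mu_v|\le \frac{|\cT^{\rm sim}_v|}{v!}\,\rho^{v-1}.
\]
Write $v=1+k(q-1)$ for a $q$-uniform hypertree with $k$ hyperedges. The number of such hypertrees on $[v]$ is classical:
\[
|\cT^{\rm sim}_v|=\frac{(v-1)!}{k!\,((q-1)!)^k}\,v^{k-1}.
\]
Even the cruder bound $|\cT^{\rm sim}_v|\le v^{v-2}\binom{v}{q}^{k}$, obtained by choosing edges one at a time, would do. With $v^{k-1}/v\le v^{k-1}$ and $v^{k}/k!\le (e(q-1)+e)^k$ (up to constants) via Stirling, this gives
\[
\frac{|\cT^{\rm sim}_v|}{v!}\le C_q^{\,k}
\]
for a constant $C_q$ depending only on $q$.

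\textbf{Step 2: convergence on $|z|\le 2$.} From Step 1,
\[
\sum_{v\ge 2}|\mu_v|\,2^v\le 2\sum_{k\ge1}\bigl(C_q\,2^{q-1}\rho^{q-1}\bigr)^k .
\]
Here $\rho^{q-1}=(q-1)!\beta^2/2^{q+1}$, so this is a geometric series with ratio $O(\beta^2)$. For $|\beta|\le\beta^*$ small, the ratio is below, say, $1/2$, so $\Phi$ is given by an absolutely and uniformly convergent power series on a disk of radius strictly larger than $2$. Hence $\Phi$ is analytic on $\{|z|\le 2\}$. Note that $\mu_1=1$ (the singleton has $b=1$), so $\Phi(z)=z+\Psi(z)$ with $\sup_{|z|\le 2}|\Psi(z)|=O(\beta^2)$.

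\textbf{Step 3: the fixed point equation.} Consider $g(t)=t\dot\Phi(t)=t+t\dot\Psi(t)$ on real $t$. All $\mu_v$ are real, since $b_T$ is real, so $g$ is real on the real axis. By Cauchy estimates on the disk of radius $2$ around points $t\in[1/2,3/2]$ (distance at least $1/2$ to the boundary), both $|\dot\Psi|$ and $|\ddot\Psi|$ are $O(\beta^2)$ there. Therefore
\[
g(1/2)=\tfrac12+O(\beta^2)<1,\qquad g(3/2)=\tfrac32+O(\beta^2)>1,
\]
and
\[
g'(t)=1+\dot\Psi(t)+t\ddot\Psi(t)\ge 1-O(\beta^2)>0 \quad\text{on } [1/2,3/2].
\]
By the intermediate value theorem and strict monotonicity there is exactly one real $t^*\in(1/2,3/2)$ with $t^*\dot\Phi(t^*)=1$. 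Shrinking $\beta^*$ as needed makes all the $O(\beta^2)$ terms small enough.

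\textbf{Main obstacle.} The only genuinely quantitative step is the counting bound in Step 1. Everything else is a soft perturbation argument. To get it, I would first try the explicit hypertree enumeration formula quoted above. If citing it is undesirable, I would instead use a Prüfer-type or branching-process encoding giving an exponential-in-$k$ bound on $|\cT^{\rm sim}_v|/v!$, which is all that is required. This step does not use the uniqueness or analyticity of $H$ from Proposition~\ref{prop:existsCCG}; it uses only $|b_T|\le 1$.
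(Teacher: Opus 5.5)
Your proposal is correct and follows the paper's proof essentially step for step. You bound $|\mu_v|$ using $|b_T|\le 1$ and the explicit hypertree count to get geometric decay in $\beta$, deduce analyticity and closeness of $\Phi$ to the identity on $|z|\le 2$, and then use the intermediate value theorem together with strict monotonicity of $t\dot\Phi(t)-1$ on $[1/2,3/2]$. The only difference is that you get the derivative bounds from Cauchy estimates rather than the paper's termwise series bounds, which is immaterial.

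One correction to an aside your argument does not actually use: the ``cruder bound'' $v^{v-2}\binom{v}{q}^k$ would \emph{not} suffice. Already $\binom{v}{q}^k/v!\ge v^{k-1}q^{-qk}$, which grows superexponentially in $k$. A crude count works only if you divide by $k!$ for the unordered hyperedges, e.g.\ $|\cT^{\rm sim}_v|\le\binom{\binom{v}{q}}{k}$.
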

The constant $2$ in $\{z: |z|\le 2\}$ is a bit arbitrary and can be made as large as desired by making $\beta^*$ sufficiently small.
The importance of the property $t^*\in (1/2,3/2)$ will be revealed only in  proof details.  
We now relate $\mathcal{F}$ to $\Phi$. 
\begin{proposition}\label{prop:t-star-2}
For $\beta^*>0$ sufficiently small and $|\beta|\le \beta^*$, $\cF(z)$ is analytic on $\{z: |z| \le 2\}$ and 
\begin{align}
\Phi(z) = \mathcal{F}(z). \label{eq:Phi-is-mathcalF}
\end{align}
\end{proposition}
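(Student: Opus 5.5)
We need to show $\Phi(z)=\rho^{-1}[\cR(\rho z)-(q-1)\cE(\rho z)]$, and that $\cF$ is analytic on $|z|\le 2$. Analyticity is immediate once $\beta^*$ is small enough that $2\rho\le 1/e^2$. Indeed, Proposition~\ref{prop:existsCCG} and the definitions (\ref{eq:Rdef}), (\ref{eq:Edef}) give analyticity of $\cR,\cE$ on $|w|\le 1/e^2$. In the $\cR$ series the $d$-th term is bounded by $|w|\,\|H\|^{d(q-1)}/(d!\,(q-1)!^d)$, which is summable. So the task reduces to an identity of power series in $w=\rho z$:
\begin{align*}
\cR(w)-(q-1)\cE(w)=\sum_{v\ge 1}\frac{w^v}{v!}\sum_{T\in\cT_v^{\rm sim}} b_T .
\end{align*}
Both sides are analytic near $0$, so it suffices to match Taylor coefficients. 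Then analytic continuation, or convergence of $\Phi$ from Proposition~\ref{prop:t-star}, extends the identity to $|z|\le 2$.

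The plan is a tree-counting argument, a hypertree version of the classical identity ``unrooted trees $=$ vertex-rooted trees $-$ edge-rooted trees''.

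\textbf{Step 1: identify $\cR$.} Substitute Proposition~\ref{prop:CCG-graph-theoretically}, $H(w,y)=\sum_v \frac{w^v}{v!}\sum_{T,r} b_{T,r}(y)$, into (\ref{eq:Rdef}). The term of order $d$ corresponds to a root vertex with $d$ hyperedges attached. Each hyperedge carries a chord $y_i\sim\nu$ and $q-1$ rooted subtrees hanging from its non-root nodes. The factor $\prod_{i<j}K(y_i,y_j)$ accounts for crossings between pairs of hyperedges at the root, which share exactly one node. The factors $K(y_i,\cdot)$ hidden inside each $b_{T,r}(y_i)$ account for crossings at the child nodes. Since two hyperedges of a hypertree meet in at most one node, the ``odd intersection'' pairs are exactly the pairs of adjacent hyperedges. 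Hence the integrand is exactly $B_T(X)$ for the assembled tree, and integrating gives $b_T$. The exponential-generating-function combinatorics handle the labels: products of EGFs give relabelled disjoint unions, $1/d!$ accounts for the unordered set of root edges, and $1/(q-1)!$ for the unordered children of each edge. This yields
\[
\cR(w)=\sum_v \frac{w^v}{v!}\sum_{T\in\cT^{\rm sim}_v}\sum_{r\in V(T)} b_T ,
\]
i.e.\ the vertex-rooted EGF. It is consistent with the fact that $b_{T,r}$ with the virtual edge removed equals $b_T$.

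\textbf{Step 2: identify $\cE$.} In the same way, $\frac{1}{q!}\int H(w,x)^q\nu(dx)$ glues $q$ rooted trees onto a single distinguished hyperedge with chord $x$. It therefore equals $\sum_v\frac{w^v}{v!}\sum_{T}\sum_{e\in E(T)} b_T$, the edge-rooted EGF, with $1/q!$ for the unordered nodes of $e$.

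\textbf{Step 3: combine.} Every hypertree on $v$ nodes has $|E|=(v-1)/(q-1)$ edges. Therefore
\[
|V|-(q-1)|E|=v-(v-1)=1,
\]
and
\[
\sum_r b_T-(q-1)\sum_e b_T=b_T .
\]
This gives the coefficient identity. The singleton contributes $w$, matching $\mu_1=1$. Finally, multiply by $\rho^{-1}$ and set $w=\rho z$.

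\textbf{Main obstacle.} The main difficulty is Step 1: carefully verifying that the sign bookkeeping of (\ref{eq:Bhr}), nested recursively, reproduces $B_T$ exactly, with no double counting of crossings. One also has to justify exchanging the infinite sums and integrals. That exchange follows from absolute convergence for $|w|\le 1/e^2$, using $|b|\le 1$ and the standard bound on the number of labelled hypertrees.
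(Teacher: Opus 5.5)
Your proposal is correct and follows essentially the same route as the paper. The paper also introduces the vertex-rooted and edge-rooted tree series $\hat\cR,\hat\cE$ (weighted by $|V(T)|$ and $|E(T)|$), identifies them with $\cR$ and $\cE$ via the same root/edge decomposition used for Proposition~\ref{prop:CCG-graph-theoretically}, and subtracts using $|V(T)|-(q-1)|E(T)|=1$; your handling of the singleton, which contributes $w$ exactly once to match $\mu_1=1$, is cleaner than the paper's final display, which as written counts that $z$ term twice.
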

Note that provided $\beta$ is small enough $|\rho z|\le 1/e^2$ when $|z|\le 2$ so that $\cF$ is indeed analytic on $|z|\le 2$.
We are now in a position to state our main result.

\begin{theorem}\label{theorem:main-result}
For every even $q$ there exists small enough $\beta^*=\beta^*(q)>0$ such that for all $\beta$ satisfying $|\beta |\le \beta^*$, 
the following convergence holds:
\begin{align}
\lim_{n\to\infty} \frac{f_{n,q}(\beta)}{n}={\cF(t^*) -1 - \log t^* \over \beta} \triangleq f_q(\beta), \label{eq:main-result}
\end{align}
where $t^*$ is defined in (\ref{eq:t^*}).
\end{theorem}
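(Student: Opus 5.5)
The plan is to expand $\E[Z_{n,q}(\beta)]=\sum_{m}\frac{\beta^m}{m!}\sum_{I_1,\ldots,I_m}\E\prod_j J_{I_j}\,\tr\prod_j\psi_{I_j}$ and organize it by the multi hypergraph formed by the hyperedges. With Rademacher disorder, $\E\prod J_{I_j}$ is $\sigma^m$ when every $I$ appears an even number of times and $0$ otherwise. So only even-multiplicity multi hypergraphs in $\cH([n])$ contribute. Each contributes $\sigma^{m}$ times the number of orderings times the normalized trace of the ordered product. The trace is $\pm 2^{-qm/2}$ (using $\psi_I^2=2^{-q}\id$), and the sign comes from the anticommutations.

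For multiplicity-two configurations, I encode a uniformly random ordering as i.i.d. chords $X(e)\sim\nu$, one chord per hyperedge, recording the positions of its two copies. Commuting $\psi_{I}$ past $\psi_{I'}$ costs $(-1)^{|I\cap I'|}$, because $q$ is even. Hence the expected sign is exactly $b_H$, and it factorizes over connected components. Summing over vertex sets and labelled component structures then gives an exponential-formula representation. I expect it to take the form
\begin{align*}
\E[Z_{n,q}(\beta)]=\sum_{k}\frac{n!}{(n-k)!}\,[\text{coefficient extraction}]\exp\Big(\sum_{v,\Delta}\text{(weight of }\cH_{v,\Delta})\Big),
\end{align*}
which is presumably the paper's (\ref{eq:sum-k-v}). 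The scaling $\sigma^2=(q-1)!/n^{q-1}$ makes a tree component on $v$ vertices carry weight $n^{-(v-1)}\rho^{v-1}$ up to combinatorial factors. This is exactly why $\rho$ and $\mu_v$ appear: the tree part of the generating function in a marking variable $z$ per vertex is $n\Phi(z)$-like. Components with excess $\Delta\ge1$ carry an extra factor $n^{-\Delta}$ and are lower order. Higher multiplicities and loops are absorbed into the same bound. Singletons, i.e.\ unused Majoranas, account for the $z$ term $\mu_1z$.

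Next, the number of covered vertices is constrained to be $n$ total (unused ones are singletons). So $\E Z_n$ is $n!\,[z^n]\exp(n\Phi(z)+\text{corrections}(z))$, up to normalization. By Cauchy's formula this equals $\frac{n!}{2\pi i}\oint \exp(n(\Phi(z)-\log z))\frac{dz}{z}$. The saddle point equation is $z\dot\Phi(z)=1$, i.e.\ $t^*$ from Proposition~\ref{prop:t-star}. Stirling, $n!\approx (n/e)^n$, together with my normalization $\mu_v$ scaling, should produce $\frac1n\log\E Z_n\to \Phi(t^*)-\log t^*-1$. Proposition~\ref{prop:t-star-2} then replaces $\Phi$ by $\cF$, giving (\ref{eq:main-result}) after dividing by $\beta$.

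For rigor, the upper bound integrates over the circle $|z|=t^*$. It needs $\mathrm{Re}\,\Phi(t^*e^{i\theta})\le\Phi(t^*)$, which holds because for small $\beta$ we have $\Phi(z)\approx z$ (since $\mu_1=1$ and the other $\mu_v=O(\rho^{v-1})$). The circle splits into a neighbourhood of $\theta=0$, handled by a Taylor expansion, and the rest, handled by the strict inequality. The lower bound uses the standard Gaussian-window saddle estimate near $\theta=0$ and controls the remaining arcs.

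The main obstacle I expect is controlling the correction terms uniformly: components with $\Delta>0$, multiplicities above $2$, and giant or large components. Their signs are not controlled, so I would bound them in absolute value. Counting connected multi hypergraphs with $v$ vertices and excess $\Delta$ gives roughly $v^{O(v+\Delta)}$. Combined with $n^{-\Delta}$, the factor $\rho^{v-1}$, and the $1/m!$ from the Taylor expansion, this should give a perturbation $n\,\epsilon(z)$ with $|\epsilon(z)|=o(1)$ on $|z|\le2$ for $\beta\le\beta^*$. That is enough to leave the saddle point asymptotics unchanged. Making this estimate analytic in $z$, so that it can be fed into the contour integral, is the delicate step. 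I would first try dominating the full series by the same series with $|b|\le1$ and with $\rho$ replaced by $C\rho$.
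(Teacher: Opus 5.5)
Your proposal is correct and follows essentially the same route as the paper. Both arguments proceed by a component expansion to $\E[Z_{n,q}]=\frac{n!}{n^n}[z^n]\exp\bigl(n\Phi(z)+B(z)\bigr)$, with multiplicity-two trees handled by the uniform-chord encoding, absolute-value bounds on the excess $\Delta\ge 1$ terms, and a saddle-point analysis on $|z|=t^*$ (two regimes for the upper bound, and a Gaussian window plus outer arcs for the lower bound). The one point to sharpen is the correction term: the paper uses the precise count $|\cH_{v,\Delta}|\le C^{v+\Delta}v^{v+\Delta}$ together with the truncation $v\le n$ to get $\sup_n\sup_{|z|\le 2}\bigl(|B(z)|+|\dot B(z)|\bigr)\le 2$, i.e.\ $\epsilon=O(1/n)$ rather than merely $o(1)$, and this is what controls the phase of $e^{B}$ inside the window $|\theta|\le L/\sqrt n$ without re-centering the saddle away from $t^*$.
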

We compute $f_q$ numerically and we find that $f_q(\beta)>0$ for all $0<\beta\le \beta^*$. We leave
it as an open question to verify positivity analytically.
A simple consequence of the fact $f_q(\beta)>0$ for some $\beta>0$ is that the ground state energy is of the order $\Theta(n)$ w.h.p. 
Indeed, let $\lambda_{\max}$ be the ground state energy of the Hamiltonian $H_{\rm SYK}$, namely the largest eigenvalue of $H_{\rm SYK}$. As the 
dimension of the space is $2^{n\over 2}$ we have
$Z_{n,q}\le 2^{-{n\over 2}} 2^{n\over 2}\exp(\beta \lambda_{\max})=\exp(\beta\lambda_{\max})$ leading to
\begin{align*}
F_{n,q}(\beta)\le {1\over\beta}\log\left(\exp(\beta\lambda_{\max})\right)\le \lambda_{\max}.
\end{align*}
The convergence of $F_{n,q}$ w.h.p. to $f_q(\beta)>0$ implies $\lambda_{\max}=\Theta(n)$ w.h.p.

\section{Proof of the main result Theorem~\ref{theorem:main-result}}\label{section:main-result}

\subsection{Power series expansion of the free energy}
Given any two superindices $I,J\in\binom{[n]}{q}$, applying (\ref{eq:anti-commutation}) we have  $\psi_I\psi_J=(-1)^{|I \cap J|}\psi_J\psi_I$.
Suppose we have any ordered sequence $(I_1,\ldots,I_m)\in \binom{[n]}{q}^m$, such that each superindex
$I$ appears an even number of times in this sequence. That is, for every $k\in [m]$ there exists an odd number of  indices $[m]\ni r\ne k$ such that $I_k=I_r$. Then
from $\psi_i^2={1\over 2}\id$ we have $\psi_I^2={1\over 2^q}\id$ (recall that $q$ is even) and 
$\prod_{k\in [m]} \psi_{I_k}=\pm 2^{-qm/2}\id$,
where the sign of the expression is determined by the ordering $(I_1,\ldots,I_m)$. We denote this multiplier $\pm 1$ by $B(I_1,\ldots,I_m)$ so that
\begin{align}
\prod_{k\in [m]} \psi_{I_k}= B(I_1,\ldots,I_m) 2^{-qm/2}\id. \label{eq:psi-product}
\end{align}
Anticipating a forthcoming derivation, the function $B(I_1,\ldots,I_m)$ will be related
to $B_H(X)$ introduced earlier, when the multiplicity of each superindex $I$ is exactly two.

Consider any $q$-uniform multi-hypergraph $H=([n],\bar m)$ on the node set $V=[n]$. For each $I\in \binom{[n]}{q}$ its multiplicity is denoted by $m_I$
and $\bar m=(m_I, I\in \binom{[n]}{q})$.
Let $w(H)$ be the set of all ordered sequences $(I_1,\ldots,I_m)$ which induce precisely the multi-hypergraph $H$. That is $w(H)$
is simply the set of all ordered sequences $(I_1,\ldots,I_m)$ such that each $I$ appears exactly $m_I$ times in this sequence.
In particular, $m=\sum_I m_I$. Let 
\begin{align}
b(H)=\sum_{(I_1,\ldots,I_m)\in w(H)}  B(I_1,\ldots,I_m). \label{eq:B(H)}
\end{align}
The values  $b(H)$   will be the key building block in our proof. We note that they factorize
over connected components. That is, let $H_1,\ldots,H_R$ denote the connected components of a multi hypergraph $H$.
Recall that $m=m(H)$ denotes the  sum of edge multiplicities of a multi hypergraph $H$. Then
\begin{align}
b(H) &={m(H) \choose m(H_k) , k\in [R]}\prod_{1\le k\le R} b(H_k). \label{eq:b-factorize}
\end{align}
This holds since operators $\psi_I$ commute  over disjoint superindices $I$.
Returning to the partition function,  we have
\begin{align*}
\E[Z_{n,q}]  = \E[\tr(e^{\beta H})] = \sum_{r \geq 0} \frac{\beta^{2r}}{(2r)!} \E[\tr(H^{2r})].
\end{align*}
Here we use the fact that the expectation for odd $r$ vanishes, since odd moments of $\bar J_I$  vanish. We expand $ \E[H^{2r}]$ using  (\ref{eq:psi-product}) 
and the fact that $\E[\bar J_I^m]=1$ for even $m$ and $\E[\bar J_I^m]=0$  for odd $m$. We obtain 
\begin{align*}
\E[\tr(H^{2r})]&=\sum_{I_1,\ldots,I_{2r}\in \binom{[n]}{q}}\E[\prod_{k\in [2r]}J_{I_k}]\tr\left(\prod_{k\in [2r]}\psi_{I_k}\right) \\
&= \sum_{I_1,\ldots,I_{2r}\in \binom{[n]}{q}}\E[\prod_{k\in [2r]} J_{I_k}]B(I_1,\ldots,I_{2r}) 2^{-r q} \\
&= \sum_H 2^{-q m(H)/2}\sigma^{m(H)}b(H),
\end{align*}
where the sum is over all multi hypergraphs $H$ such that  $m(H)=2r$ and all multiplicities
$m_I(H)$ are even. 
In light of  (\ref{eq:b-factorize})   we obtain
\begin{align*}
\E[\tr(H^{2r})]&= (2r)!\sum_H\prod_{1\le k\le C} {1\over  m(H_k)  ! }2^{-q m(H_k)/2}\sigma^{m(H_k)}b(H_k),
\end{align*}
where $H_1,\ldots,H_C$ is the connected component decomposition of $H$.
Here we used $m(H)=\sum_k m(H_k)$.
Recall that  given a subset $C\subset [n]$,  $\cH(C)$ denotes  the set of all connected multi-hypergraphs $H$ with the node set $C$
with even multiplicities. Then we can rewrite the above as
\begin{align*}
\E[\tr(H^{2r})]&=\sum_{C_0,C_1,\ldots,C_R} \sum_{H_k\in\cH(C_k),k\in [R]}(2r)!\prod_{k\in [R]}{1\over  m(H_k)!}2^{-q m(H_k)/2}\sigma^{m(H_k)} b(H_k),
\end{align*}
where the first sum is over all partitions of the node set $[n]$ into subsets $C_0,C_1,\ldots,C_R$, and the second sum is over all 
tuples $H_1,\ldots,H_R$ with $H_k\in\cH(C_k)$ such that $\sum_{k\in [R]} m(H_k) =2r$.  In particular $\cup_{1\le k\le R}C_k$
is the union of the supports of the multi hypergraphs $H_k$, and $C_0=[n]\setminus \cup_{1\le k\le R} C_k$.
Combining and incorporating $\beta$ into the product terms, 
we obtain
\begin{align*}
\E[Z_{n,q}]  &= 
 \sum_{C_0,C_1,\ldots,C_R} \sum_{H_k,k\in [R]}\prod_{k\in [R]}{1\over  m(H_k)!}(\beta2^{-q/2}\sigma)^{m(H_k)}b(H_k).
 \end{align*}
Given  $v\ge 1,\Delta\ge 0$ recall that $\cH_{v,\Delta}\subset \cH([v])$ is the subset of $\cH([v])$ consisting of multi hypergraphs $H$ with precisely 
$m(H)=2(v-1+\Delta)/(q-1)$ hyperedges and $\Delta$ is the excess of $H$.
We then rewrite the above as
\begin{align*}
\E[Z_{n,q}]  &= 
 \sum_{C_0,C_1,\ldots,C_R} \sum_{\Delta_1,\ldots,\Delta_R\ge 0}
 \prod_{k\in [R]}\left(\sum_{H\in \cH_{|C_k|,\Delta_k}}{1\over  m(H)  !}(\beta2^{-q/2}\sigma)^{m(H)}b(H)\right).
 \end{align*}
Introduce  
\begin{align}\label{eq:mu}
    \mu_{v,\Delta} = {n^{v-1}\over v!}\sum_{H\in \cH_{v,\Delta}}{1\over  m(H)!}(\beta 2^{-q/2}\sigma)^{m(H)}b(H).
\end{align}
The extra factor $n^{v-1}/v!$ is introduced for later convenience. 
We stress a distinction between $\mu_{v,\Delta}$ and $\mu_v$ defined in (\ref{eq:mu-0}).
As will be established later, $\mu_v=\mu_{v,0}$. 
We also define $\mu_{v,\Delta}=0$ when $\cH_{v,\Delta}$ is an empty set. For example
this is the case 
when  $2\le v<q$ since
the graph $H$ cannot contain any hyperedges and thus is not connected.
When $v=1,\Delta=0$,
which corresponds to a (nominally) connected graph on a single node, we set
$\mu_{1,0}=1$. 
Then
\begin{align*}
\E[Z_{n,q}]  = 
\sum_{C_0,C_1,\ldots,C_R} \sum_{\Delta_1,\ldots,\Delta_R\ge 0}\prod_{k\in [R]}(|C_k|!) n^{-({|C_k|-1})}\mu_{|C_k|,\Delta_k}.
\end{align*}
Fix any set of integers $(k_{v,\Delta}, v\ge 1,\Delta\ge 0)$ satisfying
\begin{align}
\sum_{v,\Delta} vk_{v,\Delta}&=n \label{eq:sum-v}.
%{n\over q-1}+\sum_{v,\Delta} k_{v,\Delta}{\Delta-1 \over q-1}&=2r \label{eq:sum-v-Delta}
\end{align} 
Consider the set of  partitions $(C_{v,\Delta,j},v\ge 1,\Delta\ge 0, j\in [k_{v,\Delta}])$  of $[n]$ such that $|C_{v,\Delta,j}|=v$.  
The total number of such partitions is 
\begin{align}
    \frac{n!}{\prod_{v,\Delta}(v!)^{k_{v,\Delta}} k_{v,\Delta}!}.
\end{align}
These partitions represent partitions $C_0,C_1,\ldots,C_R$ where 
$C_{v,\Delta,j}, j\in [k_{v,\Delta}]$ support graphs in $\cH_{v,\Delta}$ and $C_0$
represent nodes in $[n]$ not supporting any graphs. 
Then after cancellation of $v!$ and $|C_k|!$ we obtain
\begin{align}
\E[Z_{n,q}]  &=n!\sum_{k_{v,\Delta}}\prod_{v,\Delta}{1\over n^{(v-1)k_{v,\Delta}}}{1\over k_{v,\Delta}!}\mu_{v,\Delta}^{k_{v,\Delta}} \notag \\
&=n!\sum_{k_{v,\Delta}}{1\over n^{\sum_{v,\Delta}(v-1)k_{v,\Delta}}}\prod_{v,\Delta}{1\over k_{v,\Delta}!}\mu_{v,\Delta}^{k_{v,\Delta}} \label{eq:sum-k-v}
\end{align}
where the sum is over all $(k_{v,\Delta}, v\ge 1, \Delta\ge 0)$ satisfying (\ref{eq:sum-v}). In particular, $k_{1,0}$ represents the number of nodes
not supporting graphs, hence the choice $\mu_{1,0}=1$.

Our next goal is to represent this expression as coefficients of a generating function. 
 Introduce functions $M_0,\Delta M:\C\to \C \cup \{\infty\}$ defined by
\begin{align*}
M_0(z)&=\exp\left(n \sum_{v\ge 1}\mu_{v,0}z^{v}\right), \\
\Delta M(z)&=\exp\left(n \sum_{1\le v\le n,\Delta\ge 1}\mu_{v,\Delta}z^{v}\right),
\end{align*}
We note that while in the first expression the range of $v\ge 1$ is unrestricted, in the second expression it is restricted to $v\le n$. 
This will be dictated by the fact that the first will be analytic for the infinite sum $\sum_{v\ge 1}$ for sufficiently bounded $|z|$, while for the second expression analyticity will hold  only for the final 
sum $\sum_{1\le v\le n}$. Let $M(z)=M_0(z)\Delta M(z)$.
\begin{proposition}\label{prop:generating-expansion}
There exists $\beta^*>0$ small enough such that for all $\beta$ satisfying $|\beta|\le \beta^*$,  $M_0,\Delta M$ are analytic on $\{z: |z|\le 2\}$.
Furthermore, let $M_n$ be the coefficient of $M$ in front of $z^n$, so that $M(z)=\sum_{n\ge 0}M_nz^n$. Then $\E[Z_{n,q}]={n!  \over n^n}M_n$.
\end{proposition}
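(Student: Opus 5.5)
Two things have to be shown: analyticity of $M_0$ and $\Delta M$ on $\{|z|\le 2\}$, and the coefficient identity. I would do the identity first, since it is pure algebra.

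\textbf{Coefficient identity.} Write $M(z)=\exp\big(n\sum_{(v,\Delta)\in S}\mu_{v,\Delta}z^v\big)$, where $S$ is the index set with $\Delta=0$ and $v$ arbitrary, or $\Delta\ge1$ and $v\le n$. Expanding the exponential as a product over $(v,\Delta)\in S$ gives
\begin{align*}
M(z)=\prod_{(v,\Delta)\in S}\ \sum_{k_{v,\Delta}\ge 0}\frac{n^{k_{v,\Delta}}\mu_{v,\Delta}^{k_{v,\Delta}}}{k_{v,\Delta}!}z^{v k_{v,\Delta}}.
\end{align*}
Hence
\begin{align*}
M_n=\sum \prod_{v,\Delta}\frac{n^{k_{v,\Delta}}\mu_{v,\Delta}^{k_{v,\Delta}}}{k_{v,\Delta}!},
\end{align*}
where the sum runs over $(k_{v,\Delta})$ with $\sum vk_{v,\Delta}=n$. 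This constraint forces $v\le n$ whenever $k_{v,\Delta}>0$, so the truncation $v\le n$ in $\Delta M$ loses nothing. It also means only finitely many tuples appear, so there are no rearrangement issues once absolute convergence is known.

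Using $\sum_{v,\Delta}(v-1)k_{v,\Delta}=n-\sum_{v,\Delta} k_{v,\Delta}$, we get $n^{\sum k_{v,\Delta}}=n^n\cdot n^{-\sum(v-1)k_{v,\Delta}}$. Comparing with (\ref{eq:sum-k-v}) then gives $\E[Z_{n,q}]=\frac{n!}{n^n}M_n$. The term $\mu_{1,0}=1$ accounts for the isolated nodes, $k_{1,0}$, exactly as in (\ref{eq:sum-k-v}).

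\textbf{Analyticity of $\Delta M$.} The sum defining $\Delta M$ has finitely many $v$, but infinitely many $\Delta$ for each $v$. So I need a bound of the form
\begin{align*}
\sum_{\Delta\ge1}|\mu_{v,\Delta}|2^v<\infty
\end{align*}
for each $v\le n$; uniformity in $n$ is not needed here. Use $|b(H)|\le m(H)!\,/\prod_I m_I!$, the number of orderings. Since $m(H)=2(v-1+\Delta)/(q-1)$, the number of even-multiplicity multigraphs on $[v]$ with a given $m$ is at most $\binom{v}{q}^{m}$. Also $\beta2^{-q/2}\sigma=O(\beta n^{-(q-1)/2})$. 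Together these give geometric decay in $m$ once $|\beta|$ is small relative to $n$-dependent constants. To avoid that $n$-dependence I would instead use the cruder bound $|\mu_{v,\Delta}|\le \frac{n^{v-1}}{v!}\sum_H \frac{x^{m(H)}}{\prod_I m_I!}$ with $x=|\beta|2^{-q/2}\sigma$. The inner sum is at most $\prod_{I\subset[v]}\cosh(x)$, which is finite. So every $\mu_{v,\Delta}$ series converges absolutely, and $\Delta M$ is entire in $z$ as a finite exponential polynomial.

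\textbf{Analyticity of $M_0$, the main obstacle.} Here I need
\begin{align*}
\sum_{v}|\mu_{v,0}|2^v<\infty.
\end{align*}
For $\Delta=0$, $H$ is a hypertree with every multiplicity exactly 2. Counting labelled $q$-uniform hypertrees on $v$ nodes with $e=(v-1)/(q-1)$ edges gives at most $v!\,C^v v^{e}/e!$-type growth, by a Cayley-type formula of the form $\frac{(v-1)!\,v^{e-1}}{e!\,((q-1)!)^e}$. We have $|b(H)|\le (2e)!/2^e$ and $m(H)!=(2e)!$. With $\sigma^{2e}=((q-1)!)^e n^{-(q-1)e}=((q-1)!)^e n^{-(v-1)}$, the factor $n^{v-1}$ cancels exactly, leaving
\begin{align*}
|\mu_{v,0}|\le \frac{v^{e-1}}{v\,e!}\left(\frac{\beta^2}{2^{q+1}}\right)^{e}\le C^{v}\beta^{2e}.
\end{align*}
This is summable with $2^v$ once $\beta\le\beta^*(q)$. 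The delicate point is getting this $n$-free bound, i.e.\ checking that the $n^{v-1}$ normalization cancels exactly against $\sigma^{m}$ in the tree case, together with the hypertree counting estimate. This bound is also what later identifies $\mu_{v,0}$ with $\mu_v$.
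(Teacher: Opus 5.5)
Your proof is correct. It coincides with the paper's on two of its three parts and takes a different route on the third.

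\textbf{Where you agree with the paper.} Your coefficient extraction is the same computation the paper does. Your bound for $M_0$ is exactly the content of Lemma~\ref{lem:treebounds-0}: the hypertree count, $|b(T)|\le (2e)!/2^e$ (equivalently $|b_T|\le 1$), and the exact cancellation of $n^{v-1}$ against $\sigma^{2e}$. Together these give $|\mu_{v,0}|\le \frac{v^{e-1}}{v\,e!}(\beta^2/2^{q+1})^e$.

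\textbf{Where you differ: $\Delta M$.}
\begin{itemize}
\item The paper defers to Lemma~\ref{lem:nontrees-0}. There it counts even-multiplicity multigraphs, uses the crude bound $|b(H)|\le m(H)!$, and exploits the factor $n^{-\Delta}$. This yields $\sup_n\sup_{|z|\le 2}|B(z)|\le 1$ for small $\beta$.
\item You instead use $|b(H)|\le |w(H)|=m(H)!/\prod_I m_I!$ together with $\sum_{\Delta\ge 1}\sum_{H\in\cH_{v,\Delta}} x^{m(H)}/\prod_I m_I!\le \cosh(x)^{\binom{v}{q}}$. This shows $B$ is a polynomial of degree at most $n$, so $\Delta M$ is entire, for every $\beta$.
\end{itemize}
Your route is shorter, does not need $\beta$ small, and fully suffices for the proposition as stated. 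The trade-off is that your bound on $|B|$ is of order $e^{O(n)}$ rather than $O(1)$. So it cannot substitute for Lemma~\ref{lem:nontrees-0} later: the saddle-point argument needs $|B|,|\dot B|\le 1$ uniformly in $n$ so that $\Delta M$ does not affect the exponential order of $M_n$.

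\textbf{A false justification step.} You claim that only finitely many tuples contribute to $M_n$. That is not true. The constraint $\sum_{v,\Delta} v k_{v,\Delta}=n$ bounds $v$ but not $\Delta$, and for each $v\ge q$ infinitely many $\cH_{v,\Delta}$ are nonempty. The extraction is still valid, but its justification must be absolute convergence, which your bounds do supply. Together they give $\sum_{(v,\Delta)\in S}|\mu_{v,\Delta}|\,2^v<\infty$, so the expansion of the exponential converges absolutely on $|z|\le 2$ and can be regrouped by powers of $z$.

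\textbf{A minor point.} The identification $\mu_{v,0}=\mu_v$ is an exact identity coming from Lemma~\ref{lemma:b_v-vs-b(H)}, not a consequence of your bound.
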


\begin{proof}
The proof of analyticity is delayed and will be subsumed by stronger claims, in particular Lemma~\ref{lem:treebounds-0} covering 
the analyticity of $M_0$ and Lemma~\ref{lem:nontrees-0} covering the analyticity of $\Delta M$.
We now establish the second part. We have
\begin{align*}
M(z)&= \prod_{v\ge 1,\Delta\ge 0}\exp\left(n \mu_{v,\Delta}z^{v}\right),
\end{align*}
where for $\Delta\ge 1$ the product is restricted to $v\le n$. Then
\begin{align*}
M(z)
&=
\prod_{v\ge 1,\Delta\ge 0}\left(\sum_{k\ge 0} {\left(n\mu_{v,\Delta}z^{v}\right)^k\over k!}\right) \\
&=
\sum_{k_{v,\Delta}}\prod_{v\ge 1,\Delta\ge 0}{\left(n\mu_{v,\Delta}z^{v}\right)^{k_{v,\Delta}}\over k_{v,\Delta}!},
\end{align*}
where as before, terms with $\Delta\ge 1$ are restricted to $v\le n$.
The coefficient of $z^n$ is the sum above set to $z=1$ and restricted to tuples $(k_{v,\Delta}, v\ge 1,\Delta\ge 0)$ satisfying (\ref{eq:sum-v}).
With this restriction 
\begin{align*}
\prod_{v\ge 1,\Delta\ge 0} n^{k_{v,\Delta}}=\prod_{v\ge 1,\Delta\ge 0} n^{(v-(v-1))k_{v,\Delta}}=n^n \prod_{v\ge 1,\Delta\ge 0}{1\over n^{(v-1)k_{v,\Delta}}}.
\end{align*}
Multiplying this restricted sum by $n!\over n^n$ we obtain (\ref{eq:sum-k-v}).
\end{proof}
Our next goals are (a)  simplifying $M_0$ and analyzing its scaling limit,  and (b) establishing that the contribution of $\Delta M$ is vanishing.

\subsection{Contribution from trees}
In light of Proposition~\ref{prop:generating-expansion}, we focus on estimating $M(z)$. In this section we estimate the contribution from $M_0$
and show how it leads to the final expression (\ref{eq:main-result}) in our main result. In the following subsection we combine this estimation
with the vanishing contribution from $\Delta M$ to complete the main result.

Recall that $\cT_v^{\rm sim}$, (respectively $\cT_{v,2}$)  is the set of all  simple hypertrees  
(respectively multi hypertrees with multiplicity $2$)   on $v$ nodes. 
Recall  the definition (\ref{eq:b_H}) of $b_H$ in terms of uniformly generated chords. 
Recall also the definition (\ref{eq:B(H)}) of  $b(H)$ which is in terms of permutations of superindices. 
 We now relate $b_H$ and $b(H)$.
\begin{lemma}\label{lemma:b_v-vs-b(H)}
Given  $H\in \cH_v^{\rm sim}$ consider the corresponding multi-hypergraph in $\cH_{v,2}$ with  multiplicity $2$ for every hyperedge,  
which we denote by $H$ as well with some abuse of notation.  Then 
\begin{align}
b(H)=b_H {m(H) ! \over 2^{m(H) \over 2}}. \label{eq:bvsb}
\end{align}
\end{lemma}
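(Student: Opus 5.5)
We compare the two averages through a common combinatorial object, the uniformly random chord diagram. Let $H$ have simple edge set $E$ with $|E|=s$, so $m=m(H)=2s$. An element of $w(H)$ is a word of length $2s$ in which each $e\in E$ occurs exactly twice. Its two occurrence positions $p_e<p'_e$ form a chord $(p_e,p'_e)$. The map from words to perfect matchings of $[2s]$ labeled by $E$ is a bijection, so $|w(H)|=(2s)!/2^s = m!/2^{m/2}$. Identity (\ref{eq:bvsb}) is therefore equivalent to
\[
\frac{1}{|w(H)|}\sum_{w\in w(H)} B(w) = b_H ,
\]
that is, the average sign over uniform words equals the expected sign under i.i.d.\ $\nu$ chords.

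The first step is a sign formula. For a word $w$ we claim
\[
B(w)=\prod_{e\ne f:\ |e\cap f|\ \mathrm{odd}} K(\mathrm{chord}_e,\mathrm{chord}_f),
\]
taken over unordered pairs. To prove it, we bring the two copies of each $\psi_e$ together by adjacent transpositions, using $\psi_e\psi_f=(-1)^{|e\cap f|}\psi_f\psi_e$. We process chords by closing them one at a time, for example by moving the right copy of $\psi_e$ leftward to meet its left copy. A pair of chords contributes an odd number of swaps of $e$ past $f$ exactly when the chords cross. If the chords are nested or disjoint, the factor $f$ appears twice (or zero times) between the two copies of $e$, giving an even count. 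Once adjacent, $\psi_e^2=2^{-q}\id$ commutes with everything and can be removed. A cleaner way to organize this is induction on $s$: take an innermost chord $e$ (its two positions are consecutive after removing nothing, or choose the chord whose interval contains no complete chord). Between its endpoints lie single occurrences of chords crossing $e$, and moving one copy of $\psi_e$ across them produces exactly $\prod_{f \text{ crosses } e}(-1)^{|e\cap f|}$. Removing $e$ and applying induction finishes the argument. We also use that $|e\cap f|$ even gives sign $+1$, so only odd intersections matter. The paper's definition of $B_H$ includes the pair condition ``$|e_1\cap e_2|$ odd'', and we read the product as over unordered pairs $e_1\neq e_2$. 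Note that $|e\cap e|=q$ is even, so diagonal terms are harmless anyway.

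The second step is a distributional identity. We draw $2s$ i.i.d.\ uniform points $U_{e,1},U_{e,2}$, which gives i.i.d.\ $\nu$ chords $X(e)$. Almost surely the points are distinct. Their relative order is then a uniform permutation of the $2s$ labeled points, and forgetting which of $U_{e,1},U_{e,2}$ is which yields a uniform element of $w(H)$. This holds because each word corresponds to exactly $2^s$ orderings of the labeled points. The crossing pattern $K(X(e),X(f))$ depends only on this relative order and coincides with the crossing of the word chords. Hence $b_H=\E[B_H(X)]$ equals the uniform average of $B(w)$, which gives (\ref{eq:bvsb}).

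The main obstacle is the sign-formula step: we must verify carefully that the reordering sign is exactly the product over crossing pairs with odd intersection, with no extra global sign. The convention $\psi_I=i^{q/2}\psi_{i_1}\cdots\psi_{i_q}$ makes $\psi_I^2=2^{-q}\id$ with a positive sign, and we will check this normalization from (\ref{eq:anti-commutation}) so that no stray signs appear. The tree hypothesis is not needed, and the argument works for any simple $H$.
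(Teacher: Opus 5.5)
Your proof is correct and follows the paper's route: you count $|w(H)|=m!/2^{m/2}$, then identify the uniform average of $B$ over words with $\E[B_H(X)]$ by matching a uniform word to i.i.d.\ $\nu$ chords through their relative order. Beyond the paper, you prove the reordering-sign identity $B(w)=\prod_{\{e,f\}:\,|e\cap f|\ \mathrm{odd}}K(\cdot,\cdot)$, which the paper uses without proof, and you run the coupling in the cleaner direction (rank pattern of i.i.d.\ uniform points). The paper instead jitters the word positions inside cells of width $1/m$, which gives the correct crossing pattern but not literally the law $\nu^{\otimes s}$, since each cell then receives exactly one endpoint.
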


\begin{proof}
$H\in \cH_v^{\rm sim}$ let $E^{\rm sim}$ be its set of (distinct) hyperedges, so that for $H$ viewed as an element of $\cH_{v,2}$
its set of hyperedges $E$ is $E^{\rm sim}$ each repeated twice.
Consider any ordered sequence $\tau=(I_1,\ldots,I_{2|E|})$ of $2|E|$ hyperedges of the multi hypergraph version of  $H$.
The total number of such sequences is $|E|!/2^{|E|/2}$.
Define $X_\tau:E^{\rm sim}\to [|E|]^2$ as follows. For each $e\in E^{\rm sim}$, 
$X(e)=(u^1,u^2)$ if the first copy of the edge $e$ in the permutation $\tau$ appears 
in position $u^1$ and the second copy in position $u^2$. Given $X_\tau$ generate a random $X:E\to \Omega$ where each $u^j$ is mapped
to a point in $[{u^j-1\over |E|},{u^j\over |E|}]$ chosen uniformly at random. Suppose $\tau$ is also chosen u.a.r. 
The distribution of $X$ is then $\nu^{\otimes |E|}$. The relation (\ref{eq:bvsb}) then follows since 
\begin{align*}
b(H)\left({m(H) ! \over 2^{m(H) \over 2}}\right)^{-1}
\end{align*}
is the expectation of $B(I_1,\ldots,I_{2|E|})$ with respect to uniformly random $\tau$. 
\end{proof}

Next we validate $\mu_{v,0}=\mu_v$ for the case of multi  hypertrees $T\in \cT_{v,2}$ where $\mu_v$ was defined in (\ref{eq:mu-0}). 
By definition we have $\mu_{1,0}=\mu_1$. For the remaining cases,
applying Lemma~\ref{lemma:b_v-vs-b(H)} we have
\begin{align*}
    \mu_{v,0} &= {n^{v-1}\over v!}\sum_{T\in \cT_{v,2}}{1\over  
    m(T)!}(\beta^2 2^{-q}\sigma^2)^{m(T)/2}b(H) \\
    &=
{n^{v-1}\over v!}\sum_{T\in \cT_{v}^{\rm sim}}(\beta^2 2^{-q-1}\sigma^2)^{m(T)}b_{T}.  
\end{align*}
For every $T\in \cT_{v}^{\rm sim}$ we have $m(T)=(v-1)/(q-1)$. Then
\begin{align*}
n^{v-1}(\sigma^2)^{m(T)}=n^{v-1}\left( { (q-1)!\over n^{q-1}}\right)^{v-1 \over q-1}=((q-1)!)^{v-1 \over q-1}.
\end{align*}
We obtain:
\begin{align*}
\mu_{v,0}&=
{\left(\beta^2 2^{-q-1}(q-1)!\right)^{v-1\over q-1}\over v!}\sum_{H\in \cT_{v}^{\rm sim}}b_{H} \\
&=
{\rho^{v-1}\over v!}\sum_{H\in \cT_{v}^{\rm sim}}b_{H} \\
&=\mu_v.
\end{align*}

Our next goal is proving Proposition~\ref{prop:t-star}.

\begin{proof}[Proof of Proposition~\ref{prop:t-star}]
We begin by establishing some bounds for $\Phi$.
\begin{lemma}\label{lem:treebounds-0}
For every $c>0$  there exists small enough $\beta^*>0$ such that for all $\beta$ with $|\beta|\le \beta^*$ 
the function $\Phi(z)$ (defined by (\ref{eq:Phi}))  is analytic on $|z| \leq 2$. Furthermore
    \begin{align}
    	\sup_{|z| \leq 2} \abs{\Phi(z) - z} \leq c, \label{eq:Phi-minus-z}\\
	 \sup_{|z| \leq 2} \abs{\dot \Phi(z) - 1} \leq c, \label{eq:Phi-dot-minus-1}\\
	 \quad \sup_{|z| \leq 2} |z|^2 \abs{\ddot\Phi(z)} \leq c. \label{eq:Phi-2nd-derivative}
    \end{align}
\end{lemma}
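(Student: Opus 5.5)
We bound the power series $\Phi(z)=\sum_{v\ge 1}\mu_v z^v$ term by term, using $\mu_1=1$ and showing that the tail $\sum_{v\ge 2}|\mu_v|2^v$ can be made as small as desired by taking $\beta$ small. Since $|b_T|\le 1$, we have
\begin{align*}
|\mu_v|\le \frac{|\cT_v^{\rm sim}|}{v!}\,\rho^{v-1}.
\end{align*}
Moreover $\mu_v\ne 0$ only when $v=1+k(q-1)$ for some integer $k\ge 1$, which forces $v\ge q\ge 2$ for every nonconstant term.

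\textbf{Counting hypertrees.} We need a Cayley-type estimate for $q$-uniform hypertrees with $k$ hyperedges on $v=1+k(q-1)$ labelled nodes. The known exact formula is
\begin{align*}
|\cT_v^{\rm sim}|=\frac{(v-1)!\,v^{k-1}}{k!\,((q-1)!)^k}.
\end{align*}
If we prefer not to rely on it, a crude bound suffices: build the tree by attaching hyperedges one at a time. Each new hyperedge picks one already-present node and $q-1$ new labelled nodes, which gives at most $v^{k}\,v!/((q-1)!)^k$ labelled trees, up to a factor of $k!$ for the order of attachment. Either way, using $v^{k-1}/k!\le e^{k}(v/k)^{k}$ and $v/k\le q$, we get
\begin{align*}
\frac{|\cT_v^{\rm sim}|}{v!}\le C_q^{\,v}
\end{align*}
for some constant $C_q$ depending only on $q$. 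Hence $|\mu_v|\,2^v\le (2C_q)^v\rho^{v-1}$, and since $\rho=\Theta(\beta^{2/(q-1)})\to 0$ as $\beta\to 0$, the series converges absolutely and uniformly on $|z|\le 2$ (indeed on any bounded disc, once $\beta$ is small). This gives analyticity of $\Phi$.

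\textbf{The three estimates.} Because every nonconstant term has $v\ge 2$, the same geometric bound controls all three quantities on $|z|\le 2$:
\begin{align*}
|\Phi(z)-z| &\le \sum_{v\ge 2}|\mu_v|\,2^v,\\
|\dot\Phi(z)-1| &\le \sum_{v\ge 2} v\,|\mu_v|\,2^{v-1},\\
|z|^2|\ddot\Phi(z)| &\le \sum_{v\ge 2} v(v-1)\,|\mu_v|\,2^{v}.
\end{align*}
Each right-hand side is at most
\begin{align*}
\sum_{v\ge 2} v^2 (2C_q)^v\rho^{v-1}=O_q(\rho),
\end{align*}
once $2C_q\rho<1/2$. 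Choosing $\beta^*$ so that this quantity is at most $c$ gives (\ref{eq:Phi-minus-z})--(\ref{eq:Phi-2nd-derivative}).

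\textbf{Main obstacle.} The only nontrivial step is the exponential bound $|\cT_v^{\rm sim}|/v!\le C_q^v$. In particular we must make sure the factorial in the tree count is genuinely cancelled by $v!$ and not merely by $(v-1)!/k!$. I would verify this via the generalized Cayley formula, or via the sequential-attachment counting argument above, checking the $k!$ bookkeeping carefully.
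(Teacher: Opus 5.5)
Your proposal is correct and follows the paper's own proof: bound $|b_T|\le 1$, use the generalized Cayley formula $|\cT_v^{\rm sim}|=(v-1)!\,v^{k-1}/(k!((q-1)!)^k)$ to get $|\mu_v|\le C_q^{v}\rho^{v-1}$, and control all three quantities with the resulting geometric series in $\rho$ (the paper does the same via Stirling, with $g(\beta)=e\rho$). One caution on your optional crude count: the $1/k!$ is not justified by ``order of attachment'', since trees such as hyperpaths admit far fewer than $k!$ valid attachment orders; it does follow if you root at node $1$ and encode the tree by the unordered partition of the other $v-1$ nodes into $k$ child-blocks of size $q-1$ plus a parent node for each block, which gives $|\cT_v^{\rm sim}|\le (v-1)!\,v^{k}/(k!((q-1)!)^k)$.
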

\begin{proof}
Since $|b_H|\le 1$ and $\left|\mathcal{T}_v^{\text {sim }}\right|=\frac{(v-1)!v^{s-1}}{s!((q-1)!)^s}$, where $s=(v-1)/(q-1)$ is the number of hyperedges.
By Stirling's approximation we have 
\begin{align}
|\mu_v| &\le (1+o(1))e^v\lr{\frac{(q-1)! \beta^2}{2^{q+1}}}^{(v-1)/(q-1)} \notag \\
&\le 
(1+o(1)) e\left(e\lr{\frac{(q-1)! \beta^2}{2^{q+1}}}^{1/(q-1)}\right)^{v-1} \notag \\
&\triangleq 
(1+o(1))eg^{v-1}(\beta).
\label{eq:abs-mu}
\end{align}
We choose $\beta^*$ small enough so that when $|\beta |\le \beta^*$, $\sum_v g^v(\beta)2^v<\infty$. Then, recalling $\mu_1=1$
and $\mu_v=0$ when $v\ge 2$ and $\cT_v$ is empty, we obtain
$|\Phi(z)|=|z+\sum_{v\ge 2} \mu_v z^v|\le 2+\sum_v g^v(\beta)2^v<\infty$, and thus $\Phi$ is analytic on $|z|\le 2$.

We now establish the claimed bounds. We have $\Phi(z)-z=\sum_{v\ge 2}\mu_v z^v$. 
Given $c>0$,  we can find $\beta^*$ small enough so that $|g(\beta^*)|$ in (\ref{eq:abs-mu}) satisfies $\sum_{v\ge 2}  (1+o(1))eg^{v-1}(\beta^*)2^v\le c$ for every $v\ge 2$.
Namely (\ref{eq:Phi-minus-z}) holds.
 
Next, $\dot \Phi(z)-1=\sum_{v\ge 2} v\mu_v z^{v-1}$. Using a very similar argument we find $\beta^*$ small enough so that (\ref{eq:Phi-dot-minus-1}) holds.
Finally, $\ddot \Phi(z)=\sum_{v\ge 2} v(v-1)\mu_v z^{v-2}$. A similar argument is used to find $\beta^*$ small enough so that (\ref{eq:Phi-2nd-derivative}) holds as well. 
\end{proof}

We return to the proof of the proposition
and show that for $\beta^*$ sufficiently small $F(t) \triangleq t \dot\Phi(t) - 1$ has a unique zero $t^*$ in the interval $(1/2, 3/2)$. 
Choose $c<1/3$. We find $\beta^*$ small enough so that by Lemma~\ref{lem:treebounds-0}
${1\over 2}(\dot\Phi\left({1\over 2}\right)-1)\le c/2$, implying ${1\over 2}\dot\Phi\left({1\over 2}\right)-1\le c/2-1/2<0$.
Similarly, ${3\over 2}(\dot\Phi\left({3\over 2}\right)-1)\ge -3c/2$, implying ${3\over 2}\dot\Phi\left({3\over 2}\right)-1\ge 1/2-3c/2>0$.
Thus $F$ has zero in the interval $[1/2,3/2]$. To show uniqueness, again using Lemma~\ref{lem:treebounds-0}, we note that for $t\in [1/2,3/2]$
\begin{align*}
\dot F(t)=\dot \Phi(t)+t\ddot \Phi(t) \ge 1-c-{1\over (1/2)}c=1-3c>0.
\end{align*}
Therefore $F$ is strictly increasing and the root is unique.
\end{proof}

\subsection{Contribution from non-trees}
We turn to analyzing $\Delta M$. Define 
\begin{align*}
B(z)=\log \Delta M(z)=n \sum_{1\le v\le n,\Delta\ge 1}\mu_{v,\Delta}z^{v}.
\end{align*}

\begin{lemma}\label{lem:nontrees-0}
There exists small enough $\beta^*>0$  so that when $|\beta|\le \beta^*$ 
	\begin{align}
		\sup_{n \geq 1} \sup_{|z| \leq 2} |B(z)| \leq 1, \quad \sup_{n \geq 1} \sup_{|z| \leq 2} |\dot B(z)| \leq 1. \label{eq:B-dot-B}
	\end{align}
\end{lemma}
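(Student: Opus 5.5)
We need to bound $|\mu_{v,\Delta}|$ for $\Delta\ge 1$, $v\le n$, by something like $C g^{v-1}(\beta)\,(C'/n)^{\Delta}\cdot(\text{polynomial factors})$. The extra factor $n^{-\Delta}$ must absorb the prefactor $n$ in $B(z)$, leaving a term of order $g^{v-1}$, summable on $|z|\le 2$ for small $\beta$. Since $|b(H)|\le |w(H)| = m(H)!/\prod_I m_I!$, the factorial $1/m(H)!$ in (\ref{eq:mu}) cancels. This gives
\begin{align*}
|\mu_{v,\Delta}|\le \frac{n^{v-1}}{v!}\,(|\beta| 2^{-q/2}\sigma)^{m}\sum_{H\in\cH_{v,\Delta}}\prod_I\frac{1}{m_I!},\qquad m=\frac{2(v-1+\Delta)}{q-1}.
\end{align*}
Using $\sigma^{m}=((q-1)!)^{m/2}n^{-(q-1)m/2}=((q-1)!)^{m/2}n^{-(v-1+\Delta)}$, we obtain $n^{v-1}\sigma^m=((q-1)!)^{m/2}n^{-\Delta}$. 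It then remains to count the weighted connected even-multiplicity multi-hypergraphs on $[v]$ with total multiplicity $m$.

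For the counting step, group $H$ by its underlying simple hypergraph $S$, with $s$ distinct edges and multiplicities $m_I=2k_I$, $\sum_I k_I=m/2$. Connectivity of $S$ forces $s\ge (v-1)/(q-1)$. The sum over multiplicity vectors $\prod_I 1/(2k_I)!$ is at most $\binom{m/2-1}{s-1}\le 2^{m/2}$. To count connected simple $q$-uniform hypergraphs on $[v]$ with $s$ edges, I would take a spanning hypertree ($|\cT_v^{\rm sim}|\le v!\,C_q^{v}$, as in the proof of Lemma~\ref{lem:treebounds-0}) and add the remaining $s-(v-1)/(q-1)\le \Delta/(q-1)$ edges, each chosen among at most $v^q$ options. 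This yields
\begin{align*}
|\mu_{v,\Delta}|\le C_q^{v+\Delta}\,|\beta|^{m}\,\frac{v^{q\Delta}}{n^{\Delta}}\le C\,g'(\beta)^{v-1+\Delta}\Big(\frac{v^{q}}{n}\Big)^{\Delta}.
\end{align*}
Here $g'(\beta)\to 0$ as $\beta\to0$, because $|\beta|^m=|\beta|^{2(v-1+\Delta)/(q-1)}$.

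The main obstacle is the factor $(v^q/n)^\Delta$, which is not small when $v$ is comparable to $n$; this is why the sum is truncated at $v\le n$. Since $v\le n$, we have $v^{q}/n\le v^{q-1}$, so the bound becomes $C g'^{\,v-1}\,(g' v^{q-1})^{\Delta}\,n^{-\Delta}\cdot v^{\Delta}$. This is not good enough as stated. I would therefore split the sum into two ranges.

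In the range $v\le n^{1/(2q)}$, the factor $(v^q/n)^\Delta\le n^{-\Delta/2}$. The sum over $\Delta\ge1$ gives $O(n^{-1/2})$ in total, with room to spare against the prefactor $n$ provided we keep one power, i.e. we actually need $n\cdot n^{-\Delta}v^{q\Delta}$ summable. In the range $v>n^{1/(2q)}$, the factor $g'^{\,v-1}$ is super-polynomially small, $\le g'^{\,n^{1/(2q)}}$, and it beats $n\,(v^q)^{\Delta}$ once we also use $g'^{\Delta}$ and $v\le n$. Concretely, $g'^{(v-1)/2}v^{q\Delta}\le (g'^{1/2}e^{q})^{v}\cdots$ is crude, so instead I would bound $v^{q\Delta}/n^{\Delta}\le v^{q\Delta}$ and use $\sum_\Delta (g' v^q)^\Delta$, which diverges for large $v$.

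The refined fix is to sharpen the edge-adding count. The $\Delta$ excess edges must create $\Delta$ units of excess, and each extra edge has at most $q$ vertices chosen from $v$, so the count is about $v^{(q-1)}\cdot$ (one more) per unit of excess rather than $v^q$. Equivalently, the correct heuristic count is $(v^{q-1}/n^{q-1})^{\Delta/(q-1)}\sim (v/n)^{\Delta}$, which is at most $1$ for $v\le n$. Making this precise via a careful count of the extra edges is the step I expect to be hardest. With it, we get $n|\mu_{v,\Delta}|2^v\le n\,C(2g')^{v}(Cv/n)^{\Delta}$. For $\Delta=1$ this is $C v(2g')^v$, and larger $\Delta$ are geometrically smaller. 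Summing over $v$ gives a bound that is $\le 1$ for small $\beta$. The bound on $\dot B$ follows identically, using the extra factor $v$ in the summand.
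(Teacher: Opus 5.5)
There is a genuine gap, and it sits at the step that carries the lemma. You need a count of $\cH_{v,\Delta}$ whose growth beyond $v!\,C^{v+\Delta}$ is only $v^{\Delta}$, so that it pairs with $n^{-\Delta}$ into $(v/n)^{\Delta}$. Your last paragraph only asserts this as a heuristic, and the route you sketch toward it does not deliver it.

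\begin{itemize}
\item For $q\ge 4$, a connected $q$-uniform hypergraph need not contain a spanning hypertree. With $q=4$, the doubled pair $\{1,2,3,4\},\{3,4,5,6\}$ lies in $\cH_{6,1}$, while $\cT^{\rm sim}_6=\emptyset$ because $3\nmid 5$. So ``spanning hypertree plus extra edges'' is not a valid decomposition of $\cH_{v,\Delta}$.
\item Choosing the extra edges one at a time among $v^q$ candidates discards the $1/j!$ of an unordered choice. By your own edge count the factor should be $v^{q\Delta/(q-1)}$ rather than $v^{q\Delta}$, but either way the loss makes the sum over $\Delta$ blow up for $v$ comparable to $n$. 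This is why your range-splitting attempt collapses.
\end{itemize}
Your multiplicity bookkeeping and the identity $n^{v-1}\sigma^m=((q-1)!)^{m/2}n^{-\Delta}$ are fine; the problem is only this count.

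The missing idea, which is the paper's argument, is to drop connectivity altogether. Count all even-multiplicity multi-hypergraphs on $[v]$ with total multiplicity $2k$, where $k=(v-1+\Delta)/(q-1)$, as multisets of size $k$ drawn from the $N=\binom{v}{q}$ possible hyperedges:
\[
|\cH_{v,\Delta}|\le\binom{N+k-1}{k}\le\bigl(e(N+k)/k\bigr)^k .
\]
Since $\Delta\ge 1$ forces $k\ge v/(q-1)$, one has $N/k\le v^{q-1}$, hence $|\cH_{v,\Delta}|\le(2e)^k v^{(q-1)k}=(2e)^k v^{v-1+\Delta}$. The factorial of an unordered choice of order-$v$ many edges turns $v^q$ per edge into $v^{q-1}$ per edge, at only exponential cost.

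Combine this with $|b(H)|\le m(H)!$ (your sharper bound $|b(H)|\le|w(H)|$ also works), $v^v/v!\le e^v$, and your identity above. This gives $n\,2^v|\mu_{v,\Delta}|\le C\,v\,(v/n)^{\Delta-1}g(\beta)^{v+\Delta-1}$ with $g(\beta)\to 0$ as $\beta\to 0$. The sum over $\Delta\ge 1$ is then geometric precisely because $v\le n$, and the sum over $v$ is small for small $\beta$. The bound on $\dot B$ follows the same way with one extra factor of $v$.
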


\begin{proof}
We begin by establishing the following bound. For all $v\ge 1$ and $\Delta\ge 1$ we claim 
\begin{align}\label{eq:count-graphs}
\sum_{H\in\mathcal{H}_{v,\Delta}}  {| b(H)| \over m(H)!}\le (2e)^{v+\Delta} v^{v+\Delta}.
\end{align}
Recall that $\cH_{v,\Delta}=\emptyset$ when $\Delta\ge 1$ and $v<q$, and thus we may assume $v\ge q$.
To show the claimed bound it suffices to show 
\begin{align*} |\mathcal{H}_{v,\Delta}|\le  (qe)^{v+\Delta} v^{v+\Delta},
\end{align*}
since $|b(H)|\le m(H)!$. Recall that the total number of hyperedges 
in a graph $H\in\mathcal{H}_{v,\Delta}$ is $E\triangleq 2{v-1+\Delta\over q-1}$ by definition. The number of distinct possible hyperedges is ${v\choose q}$. 
Let $2m_1,\ldots,2m_{v\choose q}$ be their multiplicities (including the possibility $m_j=0$), which have to be even by definition of $\mathcal{H}_{v,\Delta}$. 
Then $2m_1+2m_2+\cdots+2m_{v\choose q}=E$, implying   
\begin{align*}
| \mathcal{H}_{v,\Delta}|={{v\choose q}+{E\over 2}-1\choose {v\choose q}-1}.
\end{align*}
Let $\bar v={v\choose q}$. We have
\begin{align*}
{\bar v +{E\over 2}-1 \choose \bar v-1}
&=
{\bar v +{E\over 2}-1 \choose {E\over 2}} \\
 &\le \left( {e (\bar v+{E\over 2})\over {E\over 2}} \right)^{{E\over 2}}  \\
&=e^{{E\over 2}}\left(1+{\bar v \over {E\over 2}}\right)^{{E\over 2}}.
\end{align*}
We have
\begin{align*}
{\bar v \over {E\over 2}}={{v\choose q} \over {v-1+\Delta \over q-1}}={v(v-1)\cdots(v-q+1) \over q!}{q-1\over v-1+\Delta}
\end{align*}
Since $\Delta\ge 1$ we have $v-1+\Delta\ge v$ and thus 
we obtain a bound
\begin{align*}
{v(v-1)\cdots(v-q+1) \over q!}{q\over v}
\le v^{q-1}.
\end{align*}
Then
\begin{align*}
e^{{E\over 2}}\left(1+{\bar v \over {E\over 2}}\right)^{{E\over 2}} &\le e^{E\over 2} \left(1+v^{q-1}\right)^{E\over 2} \\
&\le (2e)^{E\over 2} \left(v^{q-1}\right)^{{v-1+\Delta\over q-1}} \\
&= (2e)^{E\over 2} v^{(v-1+\Delta)} \\
&\le (2e)^{v+\Delta} v^{v+\Delta}.
\end{align*}
This completes the proof of (\ref{eq:count-graphs}).

We now return to the proof of the lemma. Recall that $\mu_{v,\Delta}=0$ when $\Delta\ge 1$ and $v<q$. We have
\begin{align*} 
\mu_{v,\Delta} =  
\sum_{H\in \cH_{v,\Delta}}{1\over  \left( 2{v-1+\Delta\over q-1} \right) !}{n^{v-1}\over v!}(\beta^2 2^{-q}\sigma^2)^{v-1+\Delta\over q-1}b(H).
\end{align*}
Then
\begin{align*} 
|\mu_{v,\Delta}| \le   {\sum_{H\in \cH_{v,\Delta}} |b(H)| \over   \left( 2{v-1+\Delta\over q-1} \right) !}
{n^{v-1}\over v!}(\beta^2 2^{-q}\sigma^2)^{v-1+\Delta\over q-1}.
\end{align*}
By (\ref{eq:count-graphs}) it is at most 
\begin{align*}
|\mu_{v,\Delta}|& \le (2e)^{v+\Delta} v^{v+\Delta}{n^{v-1}\over v!}(\beta^2 2^{-q}\sigma^2)^{v-1+\Delta\over q-1} \\
&=
(2e)^{v+\Delta} v^{v+\Delta}{n^{v-1}\over v!}(\beta^2 2^{-q}(q-1)!n^{-(q-1)})^{v-1+\Delta\over q-1} \\
&=
(2e)^{v+\Delta} v^{v+\Delta}{n^{v-1}\over v!}(\beta^2 2^{-q}(q-1)!)^{v-1+\Delta\over q-1}{1\over n^{v-1+\Delta}} \\
&=
(2e)^{v+\Delta} v^{v+\Delta}{1\over v!}(\beta^2 2^{-q}(q-1)!)^{v-1+\Delta\over q-1}{1\over n^{\Delta}} 
\end{align*}
Then
\begin{align*}
\max_{|z|\le 2}|B(z)|\le 
n\sum_{q\le v\le n,\Delta\ge 1}2^v |\mu_{v,\Delta}|  
&\le 
n(4e (\beta^2 2^{-q}(q-1)!)^{1\over q-1})^{v+\Delta}  v^{v+\Delta}{1\over v!}{1\over n^{\Delta}}
\end{align*}
We find universal constant $c\ge 1$ so that ${v^v\over v!}\le c^v$ for all $v\ge 1$. Denote $(4e)c (\beta^2 2^{-q}(q-1)!)^{1\over q-1}$ by $g(\beta)$.
We obtain a bound
\begin{align*}
\max_{|z|\le 2}|B(z)|\le  \left(g(\beta)\right)^{v+\Delta}v^{\Delta}{1\over n^{\Delta-1}}
\end{align*}
For every fixed $v\le n$ consider the sum 
\begin{align*}
\sum_{\Delta \ge 1} \left({g(\beta) v \over n}\right)^{\Delta-1}={1\over 1-{g(\beta) v\over n}} \le 2,
\end{align*}
provided $\beta$ small enough so that $g(\beta)v/n\le g(\beta)\le 1/2$.  
We obtain a bound
\begin{align*}
\max_{|z|\le 2}|B(z)| \le 2\sum_{q\le v\le n} \left(g(\beta)\right)^{v+1}v 
\end{align*}
which we arrange to be less than $1$ by taking $\beta$ sufficiently small.

The derivation for $\dot B(z)=n \sum_{1\le v\le n,\Delta\ge 1}\mu_{v,\Delta}vz^{v-1}$ is similar.
\end{proof}

\subsection{Proof completion}

Recalling  Proposition~\ref{prop:generating-expansion},
we now focus on asymptotics of $M_n$. We fix $\beta^*$ small enough and find the corresponding $t^*$ as in Proposition~\ref{prop:t-star}. 
By Cauchy's integral formula on $z=t^* e^{i\theta}$, we can write $M_n$ as
	\begin{align*}
		M_n &= \frac{1}{2\pi i}\oint_{|z|=t^*}\frac{M(z)}{z^{n+1}}\,dz \\
		&=\frac{1}{2\pi i}\oint_{|z|=t^*}\frac{\exp(n\Phi(z))\Delta M(z)}{z^{n+1}}\,dz \\
		&= \frac{(t^*)^{-n}}{2\pi}\int_{-\pi}^\pi \exp\left(n \Psi(\theta) + B(t^*e^{i\theta})\right)d\theta,
	\end{align*}
where we define $\Psi(\theta) = \Phi(t^*e^{i\theta}) - i\theta$ and recall that $B(z)=\log \Delta M(z)$. 
The remainder of the section is devoted to establishing the following proposition, from which the proof of Theorem~\ref{theorem:main-result} 
will be an easy consequence.
\begin{proposition}\label{prop:M_n}
The following holds
\begin{align*}
\lim_n {1\over n}\log M_{n}=-\log t^*+\Psi(0)=-\log t^*+\Phi(t^*).
\end{align*}
\end{proposition}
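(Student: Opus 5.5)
We apply the standard saddle point method to the integral representation
\[
M_n=\frac{(t^*)^{-n}}{2\pi}\int_{-\pi}^{\pi}\exp\bigl(n\Psi(\theta)+B(t^*e^{i\theta})\bigr)\,d\theta .
\]
We prove matching upper and lower bounds on $\frac1n\log|M_n|$ (in fact on $M_n$ itself, which we will show is positive for large $n$). Throughout we use the bounds of Lemma~\ref{lem:treebounds-0}, with a small constant $c$ to be fixed, and Lemma~\ref{lem:nontrees-0}. Since $|t^*e^{i\theta}|\le 3/2\le 2$, the latter gives $|B(t^*e^{i\theta})|\le 1$ uniformly in $n$ and $\theta$. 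Hence the factor $e^{B}$ only changes the integrand by a multiplicative factor in $[e^{-1},e]$ in modulus. It cannot affect the exponential rate, provided it does not destroy positivity in the region that dominates the integral.

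\textbf{Upper bound.} It suffices to show $\mathrm{Re}\,\Psi(\theta)\le \Psi(0)$ for all $\theta$. Here $\Psi(0)=\Phi(t^*)$ is real, because $\Phi$ has real coefficients $\mu_v$. Write $\Phi(z)=z+R(z)$ with $R(z)=\sum_{v\ge2}\mu_vz^v$. Then
\[
\mathrm{Re}\,\Psi(\theta)-\Psi(0)=t^*(\cos\theta-1)+\mathrm{Re}\bigl(R(t^*e^{i\theta})-R(t^*)\bigr).
\]
We bound $|R(t^*e^{i\theta})-R(t^*)|\le |\dot R|_\infty\, t^*|e^{i\theta}-1|$, and the coefficients of $R$ carry at least a factor $g(\beta)$ (from (\ref{eq:abs-mu})). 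This crude bound is too weak near $\theta=0$, however, since $|e^{i\theta}-1|\sim|\theta|$ while $1-\cos\theta\sim\theta^2/2$. So we split into two regimes.

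For $|\theta|\ge\delta$, with a fixed small $\delta$, we have $1-\cos\theta\ge c'\delta^2$. We then choose $\beta^*$ so small that $\sup_{|z|\le2}|R|\le c\ll c'\delta^2$, which gives $\mathrm{Re}\,\Psi(\theta)\le\Psi(0)-c'\delta^2/2$.

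For $|\theta|\le\delta$, we Taylor expand. The saddle equation $t^*\dot\Phi(t^*)=1$ (Proposition~\ref{prop:t-star}) gives $\Psi'(0)=it^*\dot\Phi(t^*)-i=0$. Next, $\Psi''(\theta)=-t^*e^{i\theta}\dot\Phi-(t^*)^2e^{2i\theta}\ddot\Phi$, so by (\ref{eq:Phi-dot-minus-1})–(\ref{eq:Phi-2nd-derivative}) we get $\mathrm{Re}\,\Psi''(\theta)\le -t^*(1-c)+c+O(\delta)\le -1/4$ on $|\theta|\le\delta$, using $t^*>1/2$. Hence $\mathrm{Re}\,\Psi(\theta)\le\Psi(0)-\theta^2/8$.

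Combining the two regimes, $|M_n|\le (t^*)^{-n}e^{n\Psi(0)+1}$, and therefore $\limsup\frac1n\log|M_n|\le -\log t^*+\Phi(t^*)$.

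\textbf{Lower bound.} This is the main obstacle, because cancellation from oscillating phases must be excluded. We use three regimes.

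First, on $|\theta|\le n^{-1/2}\log n$... it is simpler to use $|\theta|\le \varepsilon_n:=n^{-2/5}$. There, a cubic Taylor bound with $\sup|\Psi'''|=O(1)$ gives $n\Psi(\theta)=n\Psi(0)+\tfrac n2\Psi''(0)\theta^2+O(n\varepsilon_n^3)$, with error $o(1)$. The key point is that $\Psi''(0)=-t^*\dot\Phi(t^*)-(t^*)^2\ddot\Phi(t^*)$ is real and $\le-1/4$. Also $B(t^*e^{i\theta})=B(t^*)+O(\varepsilon_n)$ by the bound on $\dot B$, and $B(t^*)$ is real. So the integrand equals $e^{n\Psi(0)+B(t^*)}e^{-n a\theta^2/2}(1+o(1))$ with $a\ge1/4$ real. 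This piece therefore contributes a positive amount of order $e^{n\Psi(0)-1}n^{-1/2}$.

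Second, on $\varepsilon_n\le|\theta|\le\delta$, the upper-bound estimate gives modulus at most $e^{n\Psi(0)+1}e^{-n\varepsilon_n^2/8}=e^{n\Psi(0)}e^{-\Omega(n^{1/5})}$.

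Third, on $|\theta|\ge\delta$, the modulus is at most $e^{n\Psi(0)+1-nc'\delta^2/2}$.

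Both the second and third pieces are negligible relative to $n^{-1/2}e^{n\Psi(0)-1}$. Hence $M_n\ge (t^*)^{-n}e^{n\Psi(0)}\Omega(n^{-1/2})$, which yields the matching $\liminf$ and proves the proposition. The delicate points to check carefully are the uniform third-derivative bound on $\Psi$ near $0$, which follows from Cauchy estimates on $|z|\le2$, and the realness of $\Psi''(0)$ and $B(t^*)$, which hold because $\mu_v,\mu_{v,\Delta}\in\R$.
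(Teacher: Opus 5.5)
Your proposal is correct and follows essentially the same saddle-point argument as the paper. Both rest on the quadratic decay $\mathrm{Re}\,\Psi(\theta)\le\Psi(0)-\theta^2/8$ near $\theta=0$ (from $\dot\Psi(0)=0$ and $\mathrm{Re}\,\ddot\Psi\le -1/4$), a uniform gap away from $0$ via $|\Phi(z)-z|\le c$, and $|B|,|\dot B|\le 1$ to neutralize the non-tree factor; the only difference is cosmetic: you take the inner window $|\theta|\le n^{-2/5}$ with a $1+o(1)$ phase expansion, whereas the paper takes $|\theta|\le L/\sqrt{n}$, bounds the total phase by $\pi/3$, and chooses $L$ large so that the middle region is dominated.
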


\begin{proof}
The proof is based on  standard saddle point method. In particular, we will show that $\theta=0$ is the unique maximizer of $\Psi$, and show that the contribution
from $B$ is negligible, due to Lemma~\ref{lem:nontrees-0}.

\subsubsection{Proof of upper bound.}
We first establish
\begin{align}
\limsup_n {1\over n}\log M_{n}\le -\log t^*+\Psi(0). \label{eq:limsup-M-n}
\end{align}
Since $t^* \dot\Phi(t^*) = 1$, we have $\dot\Psi(0) = it^*\dot\Phi(t^*) - i = 0$. Next
\begin{align}
\ddot \Psi(\theta)&=-t^*e^{i\theta}\dot \Phi(t^*e^{i\theta})-(t^*e^{i\theta})^2\ddot\Phi(t^*e^{i\theta}) \notag\\
&=-t^*e^{i\theta}+t^*e^{i\theta}\left(1-\dot \Phi(t^*e^{i\theta})\right)-(t^*e^{i\theta})^2\ddot\Phi(t^*e^{i\theta}). \label{ddot-Psi}
\end{align}
Using  bounds from Lemma~\ref{lem:treebounds-0}, which apply since $|t^*e^{i\theta}|=t^*\in [1/2,3/2]$, we obtain
\begin{align*}
 |1-\dot \Phi(t^*e^{i\theta})| &\le c, \\
|t^*e^{i\theta}|^2 |\ddot\Phi(t^*e^{i\theta})| &\le c.
\end{align*}
Thus
\begin{align*}
\Re\ddot\Psi(\theta)\le -\Re(t^*e^{i\theta})+(3/2)c+c =-t^*\cos(\theta)+(5/2)c.
\end{align*}
Assume $|\theta|\le 1$. Then $t^*\cos(\theta)\ge (1/2)\cos(1)$. We take  $c$ small enough so that $-(1/2)\cos(1)+(5/2)c\le -1/4$ and obtain
that $\Re \ddot\Psi(\theta)\le -1/4$ in the range $|\theta|\le 1$. Then in this range by Taylor expansion we obtain
\begin{align}\label{eq:theta-less-1}
\Re\Psi(\theta)\le \Psi(0)-\theta^2/8=\Phi(t^*)-\theta^2/8.
\end{align}
Next we bound $\Re\Psi(\theta)$ in the remaining range $|\theta|\ge 1$. We have
\begin{align*}
\Re \Psi(\theta) &= \Re \Phi(t^* e^{i\theta}) \\
&= \Phi(t^*) - t^*(1-\cos \theta) - \left(\Phi(t^*) - t^*\right) + \Re \left(\Phi(t^*e^{i\theta}) - t^* e^{i\theta}\right) \\
\end{align*}
Applying Lemma~\ref{lem:treebounds-0}
\begin{align*}
|\Phi(t^*) - t^*|, | \Phi(t^*e^{i\theta}) - t^* e^{i\theta} |\le c.
\end{align*}
Using $t^*\ge 1/2$, we obtain an upper bound
\begin{align*}
\Re \Psi(\theta) & \le \Phi(t^*)-(1/2)(1-\cos(1))+2c.
\end{align*}
We assume $c>0$ is small enough so that the right-hand side is at most $\Phi(t^*)-1/5$. Thus when $|\theta|\ge 1$ we obtain a bound
\begin{align}\label{eq:theta-greater-1}
\Re\Psi(\theta)\le \Psi(0)-1/5.
\end{align}
We now complete the proof of (\ref{eq:limsup-M-n}). We assume $\beta^*$ small enough so that bounds (\ref{eq:B-dot-B}) from Lemma~\ref{lem:nontrees-0}
apply. Then
\begin{align*}
M_n&=
\frac{(t^*)^{-n}}{2\pi}\int_{-\pi}^\pi \Re\left(\exp\left(n \Psi(\theta) + B(t^*e^{i\theta})\right)\right)d\theta\\
&\le
\frac{(t^*)^{-n}}{2\pi}\int_{-\pi}^\pi \exp\left( \Re\left(n\Psi(\theta)) + B(t^*e^{i\theta})\right)\right)d\theta\\
&\leq \frac{(t^*)^{-n}}{2\pi} \lr{\int_{|\theta| \leq 1} e^{n\Phi(t^*) - n\theta^2/8 + 1}d\theta 
+ \int_{|\theta| \geq 1, \theta\in [-\pi,\pi]} e^{n\Phi(t^*) - n/5 + 1} d\theta} \\
&\leq (t^*)^{-n} e^{n\Phi(t^*)}  \frac{e}{2\pi}\lr{\sqrt{\frac{8\pi}{n}} + 2\pi e^{-n/5}},
\end{align*}
and (\ref{eq:limsup-M-n}) follows.

\subsubsection{Proof of lower bound.}
We now establish
\begin{align}
\liminf_n {1\over n}\log M_{n}\ge -\log t^*+\Psi(0). \label{eq:liminf-M-n}
\end{align}
For this we  partition the integrated region $[-\pi,\pi]$ into three regions.
Introduce constant $\eta\in (0,1], L\ge 1$ to be specified later. The choice of $\eta$ will depend on $\beta^*$. The choice of $L$ will be universal.
We assume that $n$ is sufficiently large so that $1/\sqrt n \leq \eta$. 
Our regions will be $|\theta|\le L/\sqrt{n}, L/\sqrt{n}\le \theta\le 1$ and $|\theta|\ge 1$. The choice of $L$ in particular will be only relevant
to the regime $L/\sqrt{n}\le \theta\le 1$.

\begin{enumerate}
\item 
$ |\theta |\le L/\sqrt{n}$. We have from (\ref{ddot-Psi})
\begin{align*}
\ddot\Psi(0)=-1-(t^*)^2\ddot\Phi(t^*) \in \R.
\end{align*}
Applying Lemma~\ref{lem:treebounds-0} $1+\ddot\Psi(0)\in [-c,c]$. We find sufficiently small $\eta>0$ such that when $|\theta |\le \eta$ we have
$1+\Re(\ddot\Psi(\eta))\in [-2c,2c]$. Recall that $\dot\Psi(0)=0$. Then by Taylor expansion,
\begin{align*}
\Phi(t^*)-{1\over 2}(1+2c)\theta^2 \le \Re\Psi(\theta)\le \Phi(t^*)-{1\over 2}(1-2c)\theta^2.
\end{align*}
We assume $c$ is small enough so that 
\begin{align}
\Phi(t^*)-\theta^2 \le \Re\Psi(\theta)\le \Phi(t^*)-{1\over 8}\theta^2. \label{eq:Phi-double-tail}
\end{align}
Next we control the imaginary part of $\Psi$. Since $\dot \Psi(0)=0$ and $\ddot\Psi(0)$ does not have an imaginary part we have
\begin{align*}
\abs{\Im(\Psi(\theta) - \Phi(t^*))} \leq |\theta|^3 \cdot \frac{1}{6} \sup_{|w| \leq \eta} \abs{{d^3\Psi(w)\over dw^3}}\triangleq C|\theta|^3.
\end{align*}
Here $C$ depends only on the choice of $\beta^*$ and $\eta$. 
Applying Lemma~\ref{lem:nontrees-0}
\begin{align*}
\abs{\Im(B(t^*e^{i\theta}) - B(t^*))} \leq \abs{t^* e^{i\theta} - t^*}\sup_\theta |\dot B(t^*e^{i\theta})| \leq \frac{3}{2}2|\theta |  = 3|\theta|.
\end{align*}

Assume $n$ sufficiently large such that $L/\sqrt{n}\le \eta$ and
\begin{align}
\frac{CL^3 + 3L}{\sqrt n} \leq \frac{\pi}{3},
\end{align}
Then for $|\theta|\le L/\sqrt{n}$
\begin{align*}
\abs{\Im\lr{n\left(\Psi(\theta)  - \Phi(t^*)\right)+B(t^*e^{i\theta}) - B(t^*)}} \leq Cn|\theta|^3+3|\theta |
\le
\frac{\pi}{3}.
\end{align*}
Using $\cos(\pi/3)=1/2$ we have
\begin{align*}
\Re\int_{|\theta| \leq L/\sqrt n} e^{n\Psi(\theta) + B(t^* e^{i\theta})} d\theta &\geq 
\frac{1}{2} \int_{|\theta| \leq L/\sqrt n} e^{\Re n\Psi(\theta) + \Re B(t^* e^{i\theta})} d\theta.
\end{align*}
By Lemma~\ref{lem:nontrees-0} $|B(t^*)-B(t^*e^{i\theta})|\le |\theta|<1$. Then 
$|\Re(B(t^*e^{i\theta}))-\Re(B(t^*))|<1$, and since $B(t^*)$ is real valued
\begin{align*}
\Re(B(t^*e^{i\theta}))\ge B(t^*)-1\ge -2,
\end{align*}
where in the last inequality Lemma~\ref{lem:nontrees-0} is invoked again. Then
\begin{align*}
\int_{|\theta| \leq L/\sqrt n} e^{\Re n\Psi(\theta) + \Re B(t^* e^{i\theta})} d\theta \ge
e^{-2}\int_{|\theta| \leq L/\sqrt n} e^{\Re n\Psi(\theta) } d\theta.
\end{align*}
Applying the left part of (\ref{eq:Phi-double-tail}) the right-hand side above is at least
\begin{align}
e^{-2}e^{n\Phi(t^*)}\int_{|\theta| \leq L/\sqrt n} e^{-n\theta^2} d\theta=e^{n\Phi(t^*)}{e^{-2}\over \sqrt{n}}\int_{-L}^Le^{-u^2} du
\triangleq K_n. \label{eq:inner-integral}
\end{align}
We note 
\begin{align}
\lim_n {1\over n}\log K_n=\Phi(t^*). \label{eq:lim-Kn}
\end{align}

\item $L/\sqrt{n}\le \theta\le 1$.
For this regime, we have by (\ref{eq:theta-less-1}) and  Lemma~\ref{lem:nontrees-0} 
	\begin{align*}
		\int_{L/\sqrt n \leq |\theta| \leq 1} \abs{e^{n\Psi(\theta) + B(t^* e^{i\theta})}} d\theta 
		&\leq e^{1 + n \Phi(t^*)} \int_{L/\sqrt n \leq |\theta| \leq 1} e^{-n\theta^2/8} d\theta \\
		&\leq \frac{e}{\sqrt n} e^{n\Phi(t^*)} \int_{|u| \geq L} e^{-u^2/8} du.
	\end{align*}
We choose $L$ large enough so that this upper bound is at most $(1/3)K_n$. Namely
\begin{align*}
 \frac{e}{\sqrt n} e^{n\Phi(t^*)} \int_{|u| \geq L} e^{-u^2/8} du
\le
{1\over 3}e^{n\Phi(t^*)}{e^{-2}\over \sqrt{n}}\int_{-L}^Le^{-u^2} du.
\end{align*}
The choice of $L$ here is universal.

\item $\theta\ge 1$. For the outer integral, we have by Lemma~\ref{lem:nontrees-0} and (\ref{eq:theta-greater-1})  that
\begin{align*}
\int_{|\theta| \geq 1}\Re(\exp(n\Psi(\theta)+B(t^*e^{i\theta})))d\theta 
\le
2\pi e e^{n\Phi(t^*) - n/5}.
\end{align*}
We choose $n$ large  enough so that again this is at most $(1/3)K_n$.
Combining all these bounds, the sum of three integrals is at least $(1/3)K_n$. Applying (\ref{eq:lim-Kn}) we complete the proof of (\ref{eq:liminf-M-n})
and thus the proof of Proposition~\ref{prop:M_n},
\end{enumerate}
\end{proof}

\begin{proof}[Proof of Theorem~\ref{theorem:main-result}]
We have completed the proof of the main result when $\Phi$ is defined as in (\ref{eq:Phi}). 
In order to complete the proof we need to validate Propositions~\ref{prop:existsCCG} and~\ref{prop:t-star-2}.  We will also prove
Proposition~\ref{prop:CCG-graph-theoretically}, which will be a step in proving Proposition~\ref{prop:t-star-2}.
These propositions are established in the next section.
\end{proof}

\section{Loose ends. Proof of Propositions~\ref{prop:existsCCG},\ref{prop:CCG-graph-theoretically},\ref{prop:t-star-2}}
\label{section:loose-ends}

\begin{proof}[Proof of Proposition~\ref{prop:existsCCG}]
Consider an operator $\cT$ acting on functions $G\in \mathcal{F}_{e,\Omega}$ defined by
\begin{align*}
 \cT(G)(z,x)=z+z\sum_{d\geq 1}\frac{1}{d!}\int_{\Omega^d}\prod_{i=1}^d
 \left[K(x,y_i)\,\frac{G^{q-1}(z,y_i)}{(q-1)!}\right]
\prod_{1\leq i<j\leq d} K(y_i,y_j)\,\nu^{\otimes d}(dy_1\cdots dy_d).
\end{align*}
While the range of $\cT(G)$ could potentially include $\infty$ if the infinite sum is not absolutely convergent, we show below that this is not the case.
\begin{lemma}\label{lemma:T-bounded-contraction}
For every $G\in \mathcal{F}_{e,\Omega}$, we have $\mathcal{T}(G)\in \mathcal{F}_{e,\Omega}$. Furthermore, for every $G_1,G_2\in\mathcal{F}_{e,\Omega}$,
\begin{equation}
\| \mathcal{T}(G_1)-\mathcal{T}(G_2)\|_{e,\Omega}
\le e^{-1}\| G_1-G_2\|_{e,\Omega}. \label{eq:T-contraction}
\end{equation}
Namely, $\mathcal{T}$ is a contraction on the space $\mathcal{F}_{e,\Omega}$ with respect to the metric induced by the norm $\|\cdot\|_{e,\Omega}$.
\end{lemma}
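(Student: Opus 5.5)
We bound the series defining $\cT(G)$ termwise by crude absolute-value estimates, using only $|K|\le 1$ and that $\nu$ is a probability measure. Fix $G\in\mathcal{F}_{e,\Omega}$, $|z|\le 1/e^2$ and $x\in\Omega$. Every factor $K(\cdot,\cdot)$ has modulus one and $|G(z,y)|\le 1$. The $d$-th summand therefore has modulus at most
\begin{align*}
\frac{1}{d!}\left(\frac{1}{(q-1)!}\right)^d\le \frac{1}{d!},
\end{align*}
so
\begin{align*}
|\cT(G)(z,x)|\le |z|\Big(1+\sum_{d\ge 1}\frac{1}{d!}\Big)=|z|e\le e^{-1}<1 .
\end{align*}
The series thus converges absolutely and uniformly in $(z,x)$, and $\cT(G)$ is finite with $\|\cT(G)\|_{e,\Omega}\le 1$.

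Continuity of $\cT(G)$ needs one extra check, since $K(x,\cdot)$ is discontinuous in $x$. Dependence on $z$ is fine: each summand is continuous in $z$ by dominated convergence, and the convergence is uniform. For $x$, I would note that $x\mapsto K(x,y)$ is discontinuous only on the $\nu$-null set where an endpoint of $x$ equals an endpoint of $y$. Dominated convergence then gives continuity of each integral in $x$. If the paper's function space really intends only bounded measurable dependence on $x$, this point can simply be dropped.

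For the contraction bound (\ref{eq:T-contraction}), set $\delta=\|G_1-G_2\|_{e,\Omega}$ and write $a_i=G_1(z,y_i)^{q-1}$, $b_i=G_2(z,y_i)^{q-1}$. All four quantities have modulus at most $1$, and by the factorization $a^{q-1}-b^{q-1}=(a-b)\sum_k a^kb^{q-2-k}$ we get $|a_i-b_i|\le (q-1)\delta$. Telescoping the product gives
\begin{align*}
\Big|\prod_{i=1}^d a_i-\prod_{i=1}^d b_i\Big|\le \sum_{i=1}^d|a_i-b_i|\le d(q-1)\delta .
\end{align*}
Hence the $d$-th summand of $\cT(G_1)-\cT(G_2)$ is bounded by
\begin{align*}
\frac{1}{d!}\frac{d(q-1)\delta}{((q-1)!)^d}\le \frac{\delta}{(d-1)!},
\end{align*}
using $(q-1)/((q-1)!)^d\le 1$. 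Summing over $d\ge 1$ yields
\begin{align*}
|\cT(G_1)(z,x)-\cT(G_2)(z,x)|\le |z|e\,\delta\le e^{-1}\delta .
\end{align*}
Taking the supremum over $(z,x)$ gives (\ref{eq:T-contraction}).

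The only potentially delicate step is the inequality $(q-1)/((q-1)!)^d\le 1$, which is uniform over $d\ge 1$ and holds trivially for all even $q\ge 2$. It is exactly the choice of radius $1/e^2$ that turns the factor $e$ coming from $\sum_{d\ge1} 1/(d-1)!$ into the contraction constant $e^{-1}$. The continuity issue in $x$ is the only non-computational point, and I expect it to be handled by the null-set argument above.
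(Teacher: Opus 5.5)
Your proposal is correct and follows the paper's proof essentially step for step. The common steps are: the termwise bound giving $|\mathcal{T}(G)(z,x)|\le |z|e\le e^{-1}$; continuity via dominated convergence, using that the discontinuities of $K(x,\cdot)$ form a $\nu$-null set; and the telescoping estimate with $|a^{q-1}-b^{q-1}|\le (q-1)|a-b|$, which produces the bound $|z|\sum_{d\ge1} \frac{d}{d!}\,\delta=|z|e\,\delta$.

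The only differences are minor. You factor the unimodular $K$-terms and the weights $1/(q-1)!$ out of the telescoped product and use $(q-1)/((q-1)!)^d\le 1$, whereas the paper keeps them inside $A_i,B_i$ and uses $(q-1)/(q-1)!\le 1$. You also argue continuity separately in $z$ and in $x$. Elements of $\mathcal{F}_{e,\Omega}$ must be jointly continuous, and separate continuity does not give this by itself. Run the dominated convergence argument along sequences $(z_n,x_n)\to(z,x)$ jointly, as the paper does; the fix is immediate.
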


\begin{proof}
Let
\begin{equation*}
D_e=\{z\in\mathbb C: |z|\le e^{-2}\}.
\end{equation*}
Fix $G\in \mathcal{F}_{e,\Omega}$. We first show that $\mathcal T(G)$ is well-defined and belongs to $\mathcal F_{e,\Omega}$.

For $z\in D_e$ and $x\in\Omega$, using $|K|=1$, $\nu(\Omega)=1$, and $\|G\|_{e,\Omega}\le 1$, we have
\begin{align*}
|\mathcal{T}(G)(z,x)|
&\le |z|
+
|z|\sum_{d\geq 1}\frac{1}{d!}
\int_{\Omega^d}
\prod_{i=1}^d
\frac{|G(z,y_i)|^{q-1}}{(q-1)!}
\,\nu^{\otimes d}(dy_1\cdots dy_d)\\
&\le |z|
+
|z|\sum_{d\geq 1}\frac{1}{d!}\\
&= |z|e \le e^{-1} \le 1.
\end{align*}
The same estimate shows that the defining series for $\mathcal T(G)$ converges absolutely and uniformly on $D_e\times\Omega$.

It remains to check continuity. For each fixed $d$, the corresponding integral is continuous in $(z,x)$ by dominated convergence. Indeed, $G$ is continuous, the integrand is uniformly bounded by an integrable function, and for every fixed $x\in\Omega$ the discontinuity set of the factors involving $K(x,y_i)$ has $\nu^{\otimes d}$-measure zero. Since the series converges uniformly on $D_e\times\Omega$, $\mathcal T(G)$ is continuous on $D_e\times\Omega$. Therefore $\mathcal T(G)\in\mathcal F_{e,\Omega}$.

Now fix $G_1,G_2\in\mathcal F_{e,\Omega}$ and set
\begin{equation*}
\Delta=\|G_1-G_2\|_{e,\Omega}.
\end{equation*}
For fixed $z\in D_e$, $x\in\Omega$, $d\ge1$, and $y_1,\ldots,y_d\in\Omega$, define
\begin{equation*}
A_i
=
K(x,y_i)\frac{G_1(z,y_i)^{q-1}}{(q-1)!},
\qquad
B_i
=
K(x,y_i)\frac{G_2(z,y_i)^{q-1}}{(q-1)!}.
\end{equation*}
Since $G_1,G_2\in\mathcal F_{e,\Omega}$, we have $|A_i|\le1$ and $|B_i|\le1$. By the telescoping identity,
\begin{align*}
\prod_{i=1}^d A_i-\prod_{i=1}^d B_i
&=
\sum_{k=1}^d
\left(\prod_{i<k}A_i\right)
(A_k-B_k)
\left(\prod_{i>k}B_i\right).
\end{align*}
Therefore,
\begin{align*}
\left|
\prod_{i=1}^d A_i-\prod_{i=1}^d B_i
\right|
&\le
\sum_{k=1}^d |A_k-B_k|.
\end{align*}
For each $k$,
\begin{align*}
|A_k-B_k|
&=
\frac{\left|G_1(z,y_k)^{q-1}-G_2(z,y_k)^{q-1}\right|}{(q-1)!}\\
&\le
\frac{q-1}{(q-1)!}
|G_1(z,y_k)-G_2(z,y_k)|\\
&\le
\Delta.
\end{align*}
Here we used the elementary inequality
\begin{equation*}
|a^m-b^m|
\le
m|a-b|,
\qquad |a|,|b|\le1,
\end{equation*}
with $m=q-1$, together with $(q-1)/(q-1)!\le1$ for $q\ge2$.

The linear term $z$ cancels when subtracting $\mathcal T(G_1)$ and $\mathcal T(G_2)$. Hence, using again $|K|=1$,
\begin{align*}
|\mathcal T(G_1)(z,x)-\mathcal T(G_2)(z,x)|
&\le
|z|
\sum_{d\ge1}\frac{1}{d!}
\int_{\Omega^d}
\left|
\prod_{i=1}^d A_i-\prod_{i=1}^d B_i
\right|
\,\nu^{\otimes d}(dy_1\cdots dy_d)\\
&\le
|z|
\sum_{d\ge1}\frac{d}{d!}\Delta\\
&=
|z|e\,\Delta\\
&\le
e^{-1}\Delta.
\end{align*}
Taking the supremum over $z\in D_e$ and $x\in\Omega$ gives
\begin{align*}
\|\mathcal T(G_1)-\mathcal T(G_2)\|_{e,\Omega}
&\le
e^{-1}\|G_1-G_2\|_{e,\Omega}.
\end{align*}
This proves \eqref{eq:T-contraction}.
\end{proof}

Returning to the proof of the proposition, the existence and uniqueness of the solution to the fixed point equation \eqref{eq:H-def} now follow from the Banach fixed point theorem, applied to the complete metric space $\mathcal F_{e,\Omega}$ with the metric induced by $\|\cdot\|_{e,\Omega}$. Equivalently, the unique fixed point is the uniform limit of the sequence
\begin{equation*}
H_0\equiv 1,\qquad H_{n+1}=\mathcal T(H_n),\quad n\ge0.
\end{equation*}
\end{proof}
Next we prove Proposition~\ref{prop:CCG-graph-theoretically}.
\begin{proof}[Proof of Proposition~\ref{prop:CCG-graph-theoretically}]
Let $\hat H(z,x)$ be defined as the right-hand side of (\ref{eq:H-as-tree-sum}). It is finite and analytic on $\{z:|z|\le 1-\delta\}$ for any $\delta>0$
since $\left|\mathcal{T}_v^{\text {sim }}\right|=\frac{(v-1)!v^{s-1}}{s!((q-1)!)^s}$ with $s=(v-1)/(q-1)$, which by Stirling's approximation is  
$e^{O(v)}v!$. We claim that $\hat H=H$. 
Recall the definition of $B_{H,r}(X,x)$ from (\ref{eq:Bhr}). Suppose $T$ is an $r$-rooted tree with at least one hyperedge. 
Let $e_1,\ldots,e_d$ be the hyperedges containing $r$.
Let $T_{e_i, k}, i\in [d], k\in [q-1]$ be the subtrees generated by the  $k$-th node of hyperedge $e_i$, where we assume that the node $r$ is the last node $q$
in the hyperedge $e_i$. The hyperedge $e_i$ itself is  included into the tree $T_{e_i,k}$.
Note that for any two hyperedges $e,f$ of a tree we have $|e\cap f|\in \{0,1\}$.
Thus $|e\cap f|=1$ iff $e\sim f$.
Then we can expand (\ref{eq:Bhr}) further as 
\begin{align*}
B_{T,r}(x)&=\prod_{i\in [d]}K(x,X(e_i))\prod_{e_i,e_j, 1\leq i<j\leq d}K(X(e_i),X(e_j)) \\
&\times \prod_{i\in [d], k\in [q-1]}\prod_{f\in T_{e_i,k}, (i,k)\in f} K(X(e_i),X(f))\prod_{f_1\sim f_2\in T_{e_i,k},f_1,f_2\neq e_i}K(X(f_1),X(f_2)).
\end{align*}
Here $(i,k)$ denotes the $k$-th node of $e_i$. We can simplify this as
\begin{align*}
B_{T,r}(x)&=\prod_{i\in [d]}K(x,X(e_i))\prod_{e_i,e_j, 1\leq i<j\leq d}K(X(e_i),X(e_j))\prod_{i\in [d], k\in [q-1]}B_{T_{e_i,k}}(X(e_i)).
\end{align*}

Consider the  expectation of $B_{T,r}(x)$  conditioned on   $X(e_i)=y_i, i\in [d]$. Then 
\begin{align*}
\E[B_{T,r}(x)| X(e_i)=y_i, i\in [d]] &=
\prod_{i\in [d]}K(x,y_i)\prod_{e_i,e_j, 1\leq i<j\leq d}K(y_i,y_j)
\prod_{i\in [d], k\in [q-1]}b_{T_{e_i,k}}(y_i)
\end{align*}
Then
\begin{align*}
b_{T,r}(x) = \int_{\Omega^v}\prod_{i\in [d]}K(x,y_i)\prod_{e_i,e_j, 1\leq i<j\leq d}K(y_i,y_j)
\prod_{i\in [d], k\in [q-1]}b_{T_{e_i,k}}(y_i) \nu(dy_i).
\end{align*}
From this we obtain that $\hat H$ is a fixed point of (\ref{eq:H-def}), 
where normalizations $1/d!$ and $1/(q-1)!$ are included since the hyperedges $e_i, i\in [d]$ and nodes $i\in [q-1]$
in each hyperedge $e_i, i\in [d]$ are unordered. Therefore, by Proposition~\ref{prop:existsCCG} $\hat H=H$. 
\end{proof}

Our final step is proving Proposition~\ref{prop:t-star-2}. 

\begin{proof}[Proof of Proposition~\ref{prop:t-star-2}]
Introduce $\hat \cR(z)$ and $\hat \cE(z)$ as follows.
\begin{align*}
\hat\cR(z)&=\sum_{v=1}^\infty \frac{z^v}{v!}\sum_{T\in\cT^{\rm sim}_v} |V(T)|b_{T}.  \\
\hat\cE(z)&=\sum_{v=1}^\infty \frac{z^v}{v!}\sum_{T\in\cT^{\rm sim}_v} |E(T)|b_{T}.
\end{align*}
As above, it is easy to see that $\hat \cR,\hat \cE$ are 
analytic on $\{z: |z|\le 1/e^2\}$. 

\begin{lemma}\label{lemma:RvsHatR}
We claim that $\hat \cR=\cR, \hat \cE=\cE$.
\end{lemma}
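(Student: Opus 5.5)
Both identities are proved by marking a node or a hyperedge in a tree and cutting there, which splits the tree into rooted subtrees. The generating function $H$ of Proposition~\ref{prop:CCG-graph-theoretically} then absorbs the subtrees. Throughout, $|z|\le 1/e^2$ is kept inside the region where $H$ has a convergent tree expansion. This follows from Proposition~\ref{prop:CCG-graph-theoretically} and from the Stirling bound $|\cT_v^{\rm sim}|=e^{O(v)}v!$ used in its proof, so every series below converges absolutely and may be rearranged.

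\textbf{$\hat\cR=\cR$.} Write
\[
\hat\cR(z)=\sum_v \frac{z^v}{v!}\sum_{T\in\cT_v^{\rm sim}}\sum_{r\in V(T)} b_T ,
\]
a sum over trees with a distinguished root $r$. The singleton gives the term $z$. Otherwise let $e_1,\dots,e_d$ be the hyperedges containing $r$, and let $T_{e_i,k}$, $k\in[q-1]$, be the subtrees rooted at the non-root nodes of $e_i$. The factorization from the proof of Proposition~\ref{prop:CCG-graph-theoretically} gives
\[
B_T=\prod_{i<j}K(X(e_i),X(e_j))\prod_{i,k}B_{T_{e_i,k},(i,k)}(X(\cdot),X(e_i)).
\]
This is the same factorization as for $B_{T,r}(\cdot,x)$, except that the factors $K(x,X(e_i))$ are absent, because there is no virtual edge at the root. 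Conditioning on $X(e_i)=y_i$ gives
\[
b_T=\int\prod_{i<j}K(y_i,y_j)\prod_{i,k}b_{T_{e_i,k}}(y_i)\,\nu^{\otimes d}(dy).
\]
Summing over labelled rooted trees is standard exponential-generating-function bookkeeping: a root, an unordered set of $d$ root hyperedges, and for each hyperedge an unordered $(q-1)$-tuple of rooted subtrees. Each subtree family contributes $H(z,y_i)$ through~(\ref{eq:H-as-tree-sum}), and the symmetry factors are $1/d!$ and $1/(q-1)!$. This reproduces exactly the right-hand side of~(\ref{eq:Rdef}). The argument is identical to the one identifying $\hat H$ as a fixed point of~(\ref{eq:H-def}).

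\textbf{$\hat\cE=\cE$.} Now distinguish a hyperedge $e$ of $T$. Removing $e$ leaves $q$ subtrees $T_1,\dots,T_q$, rooted at the nodes of $e$. Every hyperedge sharing a node with $e$ lies in exactly one $T_k$ and contains that subtree's root. Since $T$ is a tree, intersections have size $0$ or $1$, so no crossing terms appear between different $T_k$. Hence, conditioned on $X(e)=x$,
\[
\E[B_T\mid X(e)=x]=\prod_{k=1}^q b_{T_k,r_k}(x).
\]
Integrating against $\nu(dx)$ and summing over labelled configurations gives an unordered $q$-tuple of rooted trees, with factor $1/q!$ and each factor contributing $H(z,x)$. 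The labels combine via the multinomial in the EGF product, with $v=\sum_k v_k$. This yields $\frac1{q!}\int H(z,x)^q\,\nu(dx)=\cE(z)$.

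\textbf{Main obstacle.} The difficulty is the labelling and symmetry bookkeeping rather than analysis. One must check that each pair (tree, marked node) or (tree, marked edge) arises exactly once from the decomposition, and that the factorials $v!$ versus $\prod v_k!$ combine correctly through the multinomial coefficient. I would verify this first on small cases, for example $q=2$ with $v=2,3$, before writing the general bijection. I would also justify interchanging the integral over $y$ with the infinite tree sums by dominated convergence, using the absolute convergence established above.
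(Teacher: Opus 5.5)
Your proposal is correct and follows the same route as the paper: for $\hat\cR=\cR$ you mark a node and split off the $d$ root hyperedges, whose $(q-1)$ rooted subtrees are absorbed into $H(z,y_i)$, with the factor $|V(T)|$ counting root choices; for $\hat\cE=\cE$ you cut at a marked hyperedge into $q$ rooted subtrees, each contributing $H(z,x)$, with the $1/q!$ symmetry factor. You supply more bookkeeping than the paper (conditioning on the root chords, the bijection with unordered $q$-tuples of rooted trees, and absolute convergence for the interchange). On the last point, the bare estimate $e^{O(v)}v!$ does not by itself fix a radius of convergence; it is the explicit count of $\cT_v^{\rm sim}$ that gives absolute convergence of the rooted-tree series for $|z|<1/e$, which covers $|z|\le 1/e^2$.
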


\begin{proof}
Fix a tree $T\in\cT_v^{\rm sim}$ with a distinguished root $r$. Let $e_1,\ldots,e_d$ hyperedges containing $r$.
We have
\begin{align*}
B_T(X)=\prod_{i,j\in [d]}K(X(e_i),X(e_j))\prod_{i\in [d], k\in [q-1]}B_{T_{i,k}}(X),
\end{align*}
where the trees $T_{i,k}$ are defined as in the proof of Proposition~\ref{prop:CCG-graph-theoretically}. Then
\begin{align*}
b_T=\int_{\Omega^d} \prod_{i,j\in [d]}K(y_i,y_j)\prod_{i, k\in [q-1]}b_{T_{i,k}}(y_i) \nu(dy_i).
\end{align*}
The proof of $\hat\cR=\cR$ is then completed similarly to the proof of Proposition~\ref{prop:CCG-graph-theoretically} where the multiplier $v$ appears in front of $b_T$
due to $v$ choices of the root $r$.

The proof of $\hat\cE=\cE$ is obtained similarly. Fix a tree $T$ and an edge $e$ in this tree. Consider $q$ subtrees rooted at $q$ nodes of $e$. Fixing
the value $X(e)$ to $x$  we obtain $q$ copies of $H(x,z)$. The multiplier $|E(T)|$ appears due to that many choices of $e$. Dividing by $q!$ is due
to symmetry of $q$ nodes in $e$.  
\end{proof}
We now complete the proof of the proposition. Applying Lemma~\ref{lemma:RvsHatR} and $E(T)=(v-1)/(q-1)$ for every $T\in \cT_v^{\rm sim}$ we have
\begin{align*}
\cR(z)-(q-1)\cE(z)&=z+\sum_{v=1}^\infty \frac{z^v}{v!}\sum_{T\in\cT^{\rm sim}_v} |V(T)|b_{T}-
(q-1)\sum_{v=1}^\infty \frac{z^v}{v!}\sum_{T\in\cT^{\rm sim}_v} (v-1)/(q-1) b_{T} \\
&=z+\sum_{v=1}^\infty \frac{z^v}{v!}\sum_{T\in\cT^{\rm sim}_v} b_{T}.
\end{align*}
Since $\cR$ and $\cE$ are analytic on $|z|\le 1/e^2$, then $\cR(\rho z),\cE(\rho z)$ are analytic on $|z|\le 2$ when $|\beta|\le \beta^*$ and $\beta^*$ is sufficiently small.
Then
\begin{align*}
\cF(z) &=\rho^{-1}\left(\rho z+\sum_{v=1}^\infty \frac{(\rho z)^v}{v!}\sum_{T\in\cT^{\rm sim}_v} b_{T}\right) \\
&=z+\sum_{v=1}^\infty \frac{\rho^{v-1}z^v}{v!}\sum_{T\in\cT^{\rm sim}_v} b_{T} \\
&= z+\sum_{v=1}^\infty z^v \mu_v \\
&=\Phi(z).
\end{align*}
This completes the proof of the proposition.
\end{proof}

\section{Numerical results. Comparison with physics derivations}\label{section:Numerics}
In this section we report on the numerical results of computing the free energy limit (\ref{eq:main-result}) and compare it with the 
known analytical results from~\cite{feng2019spectrum} when $q=2$, and with the
results based on the physics derivation in~\cite{maldacena2016remarks} when $q=4$. When $q=2$ the free energy limit is derived
using the random matrix theory and is given as
\begin{align}
{1\over 2\beta}\int_{-2}^2 \log(\cosh(\beta\lambda\over 2)){\sqrt{4-\lambda^2} \over 2\pi}d\lambda. \label{eq:free-energy-rm}
\end{align}
We computed this expression numerically and we have computed numerically our answer given as  (\ref{eq:main-result}). The results
match within the range $\beta\in [0,3]$ which is the range for which we managed to compute (\ref{eq:main-result}). The results 
are reported in Figure~\ref{fig:q2}. We note an excellent agreement of our result (\ref{eq:main-result}) with (\ref{eq:free-energy-rm}).

\begin{figure}
    \centering
    \includegraphics[width=0.8\linewidth]{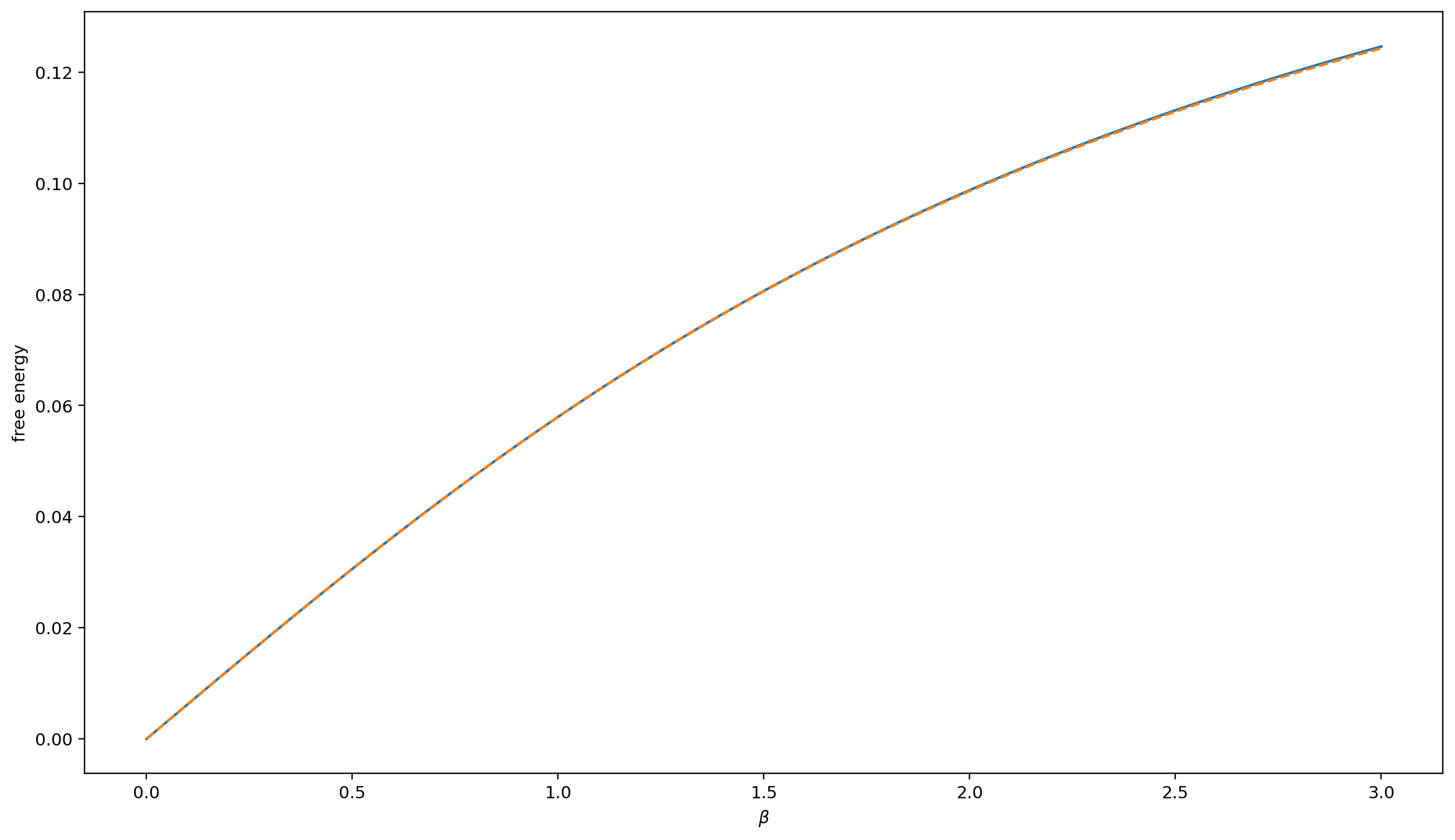}
    \caption{The orange dotted curve is obtained by numerically solving \eqref{eq:main-result} for $q=2$, and the blue curve using (\ref{eq:free-energy-rm}). They agree to within the numerical precision of our solver.}
    \label{fig:q2}
\end{figure}

We begin by describing the answers from this paper first. 
Introduce two functions $G_\beta:[0,\beta]\to\R$ and $\Sigma_\beta:[0,\beta]\to\R$ defined through the following two identities. The first identity is $\Sigma_\beta(\tau)=G_\beta(\tau)^{q-1}$ for all $\tau \in [0,\beta]$. To state the second identity, consider the Fourier transforms $\hat G_\beta(\omega_n), \hat \Sigma_\beta(\omega_n)$ for $\omega_n=(2n+1)\pi/\beta, n=0,1,2,\ldots$. The second identity is
\begin{align}\label{eq:physics-answer}
\hat G_\beta(\omega_n)={1\over -i\omega_n-\hat \Sigma_\beta(\omega_n)}.
\end{align}
The free energy is then predicted as 
\begin{align}
f_{{\rm physics},q}(\beta)\triangleq \frac{1}{\beta}\int_0^\beta \left[\frac{1}{q}\int_0^bG_b(\tau)^qd\tau\right]db. \label{eq:free-energy-physics}
\end{align}
We have verified numerically that $f_{{\rm physics},q}(\beta)=f_q(\beta)$ for $\beta\in [0,3]$ for $q=4$, see Figure~\ref{fig:q4}.

\begin{figure}
    \centering
    \includegraphics[width=0.8\linewidth]{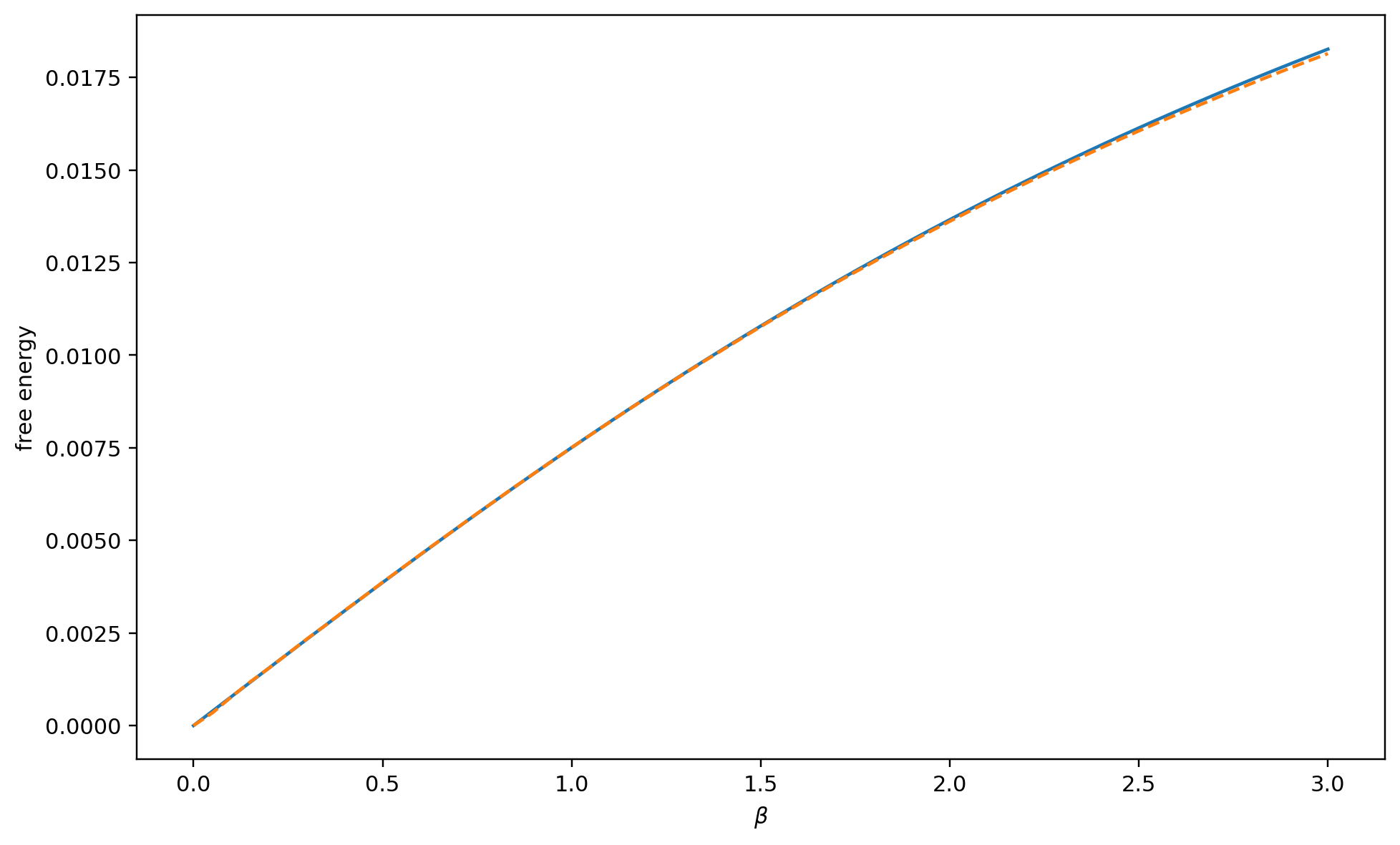}
    \caption{The orange dotted curve is obtained by numerically solving \eqref{eq:main-result} for $q=4$, and the blue curve using (\ref{eq:free-energy-physics}). They agree to within the numerical precision of our solver.}
    \label{fig:q4}
\end{figure}

\section{Open problems}\label{section:open-problems}
A number of questions remain open and we discuss here some of them. First, the agreement between our computation and the computations based on physics methods (path integration plus replica method) are purely numerical. It is quite likely that there is an analytical link between the methods. At the very least we believe that there is a way to reconcile our formula (\ref{eq:main-result}) with the physics-based answer (\ref{eq:physics-answer}). 
More ambitiously, perhaps there is a path to rigorizing the physics answer directly. At first glance this might appear to be beyond reach, since the answer involves the replica trick which remains mathematically unsound. However, we note that the computations are done for the much simpler annealed free energy which is trivialized in the classical domain; hence, there might be a hope of rigorizing the path integration plus replica method for the annealed case.

The next most important question remaining for future research is extending our result to the domain of all inverse temperature $\beta$.
We believe this is achievable by looking at the locally tree-like structure of  hypergraphs induced by the product of local Hamiltonians $H_{I_1}\cdots H_{I_m}$ and establishing some form of correlation decay for the Chord-Cavity-Generator function $H$. Furthermore, we believe it might be possible to compute various observables of interest rigorously using our methods, such OTOCs in real and imaginary times. The scaling behavior of these observables is related to quantities relevant in AdS/CFT, and thus providing a rigorous confirmation of physics-derived answers to these questions in~\cite{maldacena2016remarks} is of great interest. 

Frustratingly, even proving the existence of the free energy limit in all temperature regimes, and similarly the existence of the ground state energy limit remain open. Perhaps these questions alone can be addressed by employing some sort of Gaussian comparison methods as for the classical spin glass case. The complications arise from the fact that states with different levels of entanglement have different variances and this precludes the application of the comparison methods, at least superficially.

Next, we believe our approach is extendable to other models, of which the immediate natural candidate is the quantum spin glass over Paulis. The model is defined in terms of a Hamiltonian $H=\sum_I H_I$ with $I\in {[n]\choose q}$,  $H_I=J_I P_I$, where $J_I$ is centered Gaussian with the same normalization as in the SYK model and $P_I$ ranges over all Paulis supported on $I$. This model was analyzed in~\cite{anschuetz2025bounds} and~\cite{anschuetz2025strongly}. Combining the results of the two papers one can show that the ground state energy is $\Theta(3^{q\over 2}f(q)n)$ with $f(q)\in [{1\over \log^{O(1)}q}, 2\log q]$. We believe that using our method, one should be able to obtain the precise value of the annealed free energy for this model at least for high temperature, but also possibly for all temperature values, and ultimately obtain the asymptotic value of $f(q)$ in the limit $q\to\infty$. We note that in the classical realm quenched and annealed free energies are asymptotically the same as $q$ grows~\cite{GamarnikJagannathKizildag2025Shattering} and we anticipate the same for the quantum counterpart~\cite{swingle2024bosonic}. Finally, we hope our methods can be employed for computing free energies for the SYK and quantum spin glass models on lattices along the lines of the sequential cavity method introduced in~\cite{GamarnikKatzSequentialCavity}. We see no immediate obstructions to the implementation of our method at least for the problem of computing the annealed free energy for lattice models at high temperature.

\section*{Acknowledgements}
The authors gratefully acknowledge enlightening conversations with J.C. Mourrat, Victor Issa and Kuikui Liu.
The work of the first author is funded by ONR grant N000142512545. AS is funded by a Google PhD Fellowship. AZ is funded by a Hertz Fellowship.

\bibliographystyle{amsplain}
\bibliography{bibliography-05.2026}

\end{document}